\documentclass[journal]{IEEEtran}
\usepackage{amsmath,amsfonts,amsthm,amssymb}
\usepackage{algorithmic}
\usepackage{algorithm}
\usepackage{enumitem}
\usepackage{array}
\usepackage[caption=false,font=normalsize,labelfont=sf,textfont=sf]{subfig}
\usepackage{textcomp}
\usepackage{stfloats}
\usepackage{multirow}
\usepackage{url}
\usepackage{verbatim}
\usepackage{booktabs}
\usepackage{graphicx}
\usepackage{cite}
\usepackage{xcolor}
\usepackage{cancel}
\usepackage[scr=boondoxo,scrscaled=1.05]{mathalfa}
\usepackage{bm}

\usepackage{tikz}
\usetikzlibrary{math,arrows.meta,spy,automata,intersections,decorations,arrows,decorations.markings,decorations.footprints,fadings,calc,trees,mindmap,shadows,decorations.text,patterns,positioning,shapes,patterns,decorations.pathmorphing,patterns.meta,fit,3d,plotmarks,svg.path}

\makeatletter
\tikzset{hatch distance/.store in=\hatchdistance,hatch distance=5pt,hatch thickness/.store in=\hatchthickness,hatch thickness=5pt}

\pgfdeclarepatternformonly[\hatchdistance,\hatchthickness]{north east hatch}
    {\pgfqpoint{-\hatchthickness}{-\hatchthickness}}
    {\pgfqpoint{\hatchdistance+\hatchthickness}{\hatchdistance+\hatchthickness}}
    {\pgfpoint{\hatchdistance}{\hatchdistance}}%
    {
        \pgfsetcolor{\tikz@pattern@color}
        \pgfsetlinewidth{\hatchthickness}
        \pgfpathmoveto{\pgfqpoint{-\hatchthickness}{-\hatchthickness}}       
        \pgfpathlineto{\pgfqpoint{\hatchdistance+\hatchthickness}{\hatchdistance+\hatchthickness}}
        \pgfusepath{stroke}
    }
\pgfdeclarepatternformonly[\hatchdistance,\hatchthickness]{north west hatch}
    {\pgfqpoint{-\hatchthickness}{-\hatchthickness}}
    {\pgfqpoint{\hatchdistance+\hatchthickness}{\hatchdistance+\hatchthickness}}
    {\pgfpoint{\hatchdistance}{\hatchdistance}}%
    {
        \pgfsetcolor{\tikz@pattern@color}
        \pgfsetlinewidth{\hatchthickness}
        \pgfpathmoveto{\pgfqpoint{\hatchdistance+\hatchthickness}{-\hatchthickness}}
        \pgfpathlineto{\pgfqpoint{-\hatchthickness}{\hatchdistance+\hatchthickness}}
        \pgfusepath{stroke}
    }
\makeatother

\pgfdeclarepattern{
  name=hatch,
  parameters={\hatchsize,\hatchangle,\hatchlinewidth},
  bottom left={\pgfpoint{-.1pt}{-.1pt}},
  top right={\pgfpoint{\hatchsize+.1pt}{\hatchsize+.1pt}},
  tile size={\pgfpoint{\hatchsize}{\hatchsize}},
  tile transformation={\pgftransformrotate{\hatchangle}},
  code={
    \pgfsetlinewidth{\hatchlinewidth}
    \pgfpathmoveto{\pgfpoint{-.1pt}{-.1pt}}
    \pgfpathlineto{\pgfpoint{\hatchsize+.1pt}{\hatchsize+.1pt}}
    \pgfpathmoveto{\pgfpoint{-.1pt}{\hatchsize+.1pt}}
    \pgfpathlineto{\pgfpoint{\hatchsize+.1pt}{-.1pt}}
    \pgfusepath{stroke}
  }
}

\pgfdeclarepattern{
  name=pLines,
  parameters={\hatchsize,\hatchangle,\hatchlinewidth},
  bottom left={\pgfpoint{-.1pt}{-.1pt}},
  top right={\pgfpoint{\hatchsize+.1pt}{\hatchsize+.1pt}},
  tile size={\pgfpoint{\hatchsize}{\hatchsize}},
  tile transformation={\pgftransformrotate{\hatchangle}},
  code={
    \pgfsetlinewidth{\hatchlinewidth}
    \pgfpathmoveto{\pgfpoint{-.1pt}{-.1pt}}
    \pgfpathlineto{\pgfpoint{\hatchsize+.1pt}{\hatchsize+.1pt}}
    \pgfusepath{stroke}
  }
}

\tikzset{
  hatch size/.store in=\hatchsize,
  hatch angle/.store in=\hatchangle,
  hatch line width/.store in=\hatchlinewidth,
  hatch size=5pt,
  hatch angle=0pt,
  hatch line width=.5pt,
}

\tikzset{
  pLines size/.store in=\hatchsize,
  pLines angle/.store in=\hatchangle,
  pLines line width/.store in=\hatchlinewidth,
  pLines size=5pt,
  pLines angle=0pt,
  pLines line width=.5pt,
}

\tikzstyle{state}=[draw=black, circle, fill=cyan!30, text=black, minimum width=2em, line width=1pt]
\tikzstyle{dstate}=[draw,cyan, circle, fill=cyan!40, text=black, minimum width=2em, line width=1pt]
\tikzstyle{fixed}=[draw=black,lightgray, circle, fill=Cornsilk!40, text=black, minimum width=2em, line width=1pt]

\tikzstyle{fstate}=[draw=black, fill=red!20, text=black, minimum width=1.25em, minimum height=1.25em,preaction={fill, red!20}, pattern=hatch, pattern color=black, hatch size=7.5pt]
\tikzstyle{flstate}=[draw,black, text=black, minimum width=1.25em, minimum height=1.25em, preaction={fill, violet!30}, pattern=hatch, pattern color=black, hatch size=7.5pt, hatch angle = -45]

\tikzstyle{fdstate}=[draw,black, text=black, minimum width=1.25em, minimum height=1.25em,preaction={fill, blue!10},pattern=pLines, pattern color=blue, pLines size = 8pt, pLines angle=0]
\tikzstyle{ffixed}=[draw=black, text=black, minimum width=1.25em, minimum height=1.25em,preaction={fill,lightgray!20}, pattern=north west hatch,hatch distance=7pt,hatch thickness=.5pt,pattern color=black]
\tikzstyle{fhfixed}=[draw=black, text=black, minimum width=1.25em, minimum height=1.25em, preaction={fill, green!10},pattern = dots]
\tikzstyle{fzero}=[draw=black, text=black, minimum width=1.25em, minimum height=1.25em, preaction={fill, orange!30},pattern = bricks]
\tikzstyle{fbar}=[draw=black, fill=yellow!20, text=black, minimum width=1.25em, minimum height=1.25em]

\usepackage[svgnames]{xcolor}

\DeclareMathOperator*{\argmin}{arg\,min}
\hyphenation{op-tical net-works semi-conduc-tor IEEE-Xplore}

\newtheorem{prop}{Proposition}
\newtheorem{remark}{Remark}

\newcommand{\eor}{\ensuremath{\hfill\blacklozenge}}

\providecommand{\norm}[1]{\ensuremath{\left\lVert#1\right\rVert}}
\providecommand{\abs}[1]{|#1|}

\newcommand{\bbm}{\begin{bmatrix}}
\newcommand{\ebm}{\end{bmatrix}}

\definecolor{lavender}{RGB}{150, 123, 182}
\begin{document}

\title{Factor Graph Optimization for Leak Localization in Water Distribution Networks}


\author{Paul~Irofti,~\IEEEmembership{Member,~IEEE,}, Luis~Romero-Ben,
Florin~Stoican,~\IEEEmembership{Member,~IEEE,}
and~Vicenç~Puig
\thanks{Paul Irofti is with LOS-CS-FMI,
University of Bucharest, 
e-mail: {paul@irofti.net}}%
\thanks{Luis Romero-Ben and Vicenç Puig are with the Institut de Robòtica i Informàtica Industrial, CSIC-UPC, Llorens i Artigas 4-6, 08028, Barcelona,
e-mail: \{luis.romero.ben,vicenc.puig\}@upc.edu}
\thanks{Florin Stoican is with the  Dept. of Automation Control and Systems Engineering, Politehnica University of Bucharest,
e-mail: {florin.stoican@upb.ro}.}
\thanks{Vicenç Puig is also with the Supervision, Safety and Automatic Control Research Center (CS2AC) of the Universitat Politècnica de Catalunya
}
\thanks{
This work was supported in part by
the project “Romanian Hub for Artificial Intelligence - HRIA”, Smart Growth, Digitization and Financial Instruments Program, 2021-2027, MySMIS no. 334906,
and in part by the project “Sustainable and learning-based management of large-scale multi-resource systems” - SEAMLESS, ref. PID2023-148840OB-I00, and the project “Sensor fusion for real-time monitoring of leaks in water distribution networks” - FUSAGUA, no. 202550E039, both funded by MCIU/ AEI /10.13039/501100011033.
}
}%



\maketitle


\begin{abstract}
Detecting and localizing leaks in water distribution network systems is an important topic
with direct environmental, economic, and social impact.
Our paper is the first to explore the use of factor graph optimization techniques for leak localization in water distribution networks,
enabling us to perform sensor fusion between pressure and demand sensor readings
and to estimate the network's temporal and structural state evolution across all network nodes.
The methodology introduces specific water network factors
and proposes a new architecture composed of two factor graphs:
a leak-free state estimation factor graph
and a leak localization factor graph.
When a new sensor reading is obtained,
unlike Kalman and other interpolation-based methods,
which estimate only the current network state,
factor graphs update both current and past states.
Results on Modena, L-TOWN and synthetic networks
show that factor graphs are much faster than nonlinear Kalman-based alternatives such as the UKF,
while also providing improvements in localization
compared to state-of-the-art estimation-localization approaches.
Implementation and benchmarks are available at \url{https://github.com/pirofti/FGLL}.
\end{abstract}

\begin{IEEEkeywords}
leak localization, water distribution network, factor graph optimization, state estimation, interpolation
\end{IEEEkeywords}


\section{Introduction}\label{sec:intro}
\IEEEPARstart{W}{ater} is 
essential for modern societies, with the global demand projected to grow by 55\% between 2000 and 2050 \cite{Leflaive2012}. However, inefficacies during storage, transport, treatment, and distribution reduce service quality and increase economic, environmental, and even sanitary costs \cite{Cohen2004}. A major challenge for water utilities is the appearance of leaks in water distribution networks (WDNs), which are calculated to sum up to 126 billion m\textsuperscript{3} of water worldwide per year \cite{Liemberger2019}. This justifies the interest of water utilities in effective leak management methods. Traditionally, water utilities used night flow analysis and human teams with acoustic sensors to detect and locate leaks. However, this localization approach has scalability problems. Leaks can be fixed by means of software-based methods, which monitor the entire network and reduce the search area. Such methods can be classified by their reliance on a hydraulic model of the network.

Model-based methods require well-calibrated hydraulic models to simulate the network behavior in leak scenarios, comparing the obtained data with real measurements to identify the leaks. In this category, we can find important methodologies based on sensitivity analysis \cite{Perez2014}, inverse problem \cite{Sophocleous2019} and Bayesian theory \cite{Costanzo2014}. These methods normally excel when the model is well calibrated, but this is difficult due to the complexity of real-world networks (and the hydraulic model) and potential modeling errors \cite{Blesa2018}. Advancements in data analysis and machine learning reduced the dependence on the hydraulic model, which is used mainly to generate training data. These mixed model-based/data-driven methods include algorithms such as artificial neural networks (ANNs) \cite{Sun2019}, deep learning \cite{Zhou2019} and dictionary learning \cite{Irofti2020}. Recently, purely data-driven methods have emerged as a solution for water utilities lacking a well-calibrated hydraulic model. 
These methods locate leaks by comparing network states under leak and leak-free conditions. In particular, most of these methods are based on state interpolation, which estimates the complete network state (defined by a chosen hydraulic variable) from available measurements and the underlying graph structure~\cite{Soldevila2020, RomeroBen2022}.

Interpolation-based methods typically use nodal pressures or hydraulic heads (pressure and elevation) as state representatives, given the reduced cost and ease of installing pressure sensors compared to other sensor types and their known sensitivity to leaks. However, modern WDNs are increasingly incorporating additional sensor types due to technological advances and the decrease in ancillary costs. Therefore, our most recent research focuses on the integration of sensor fusion techniques within state estimation and leak localization approaches, enabling the exploitation of all available sources of information across the network. For example, \cite{RomeroBen2024b} presented a methodology that fuses demand and pressure information through an Unscented Kalman Filter (UKF) approach.

This article presents the first attempt to study the utilization of factor graph optimization (FGO) methods for WDN sensor fusion, state estimation, and leak localization, allowing to integrate pressure and demand measurements during the process.

\noindent\textbf{Contributions}: 

\begin{itemize}
    \item A novel architecture is proposed to execute the state estimation and localization operations.
    \item Custom factors are designed not only to perform sensor fusion or state interpolation operations but also to exploit hidden constraints and dependencies obtained from expert knowledge.
    \item The proposed method is capable of exploiting temporal information, performing a single optimization to obtain the state estimation of a set of time instants, as well as a single leak localization result that considers all of them.
\end{itemize}

\noindent\textbf{Notation}: We represent the set of all natural numbers up to $N$ with $[N]$.
Scalars are denoted through lowercase letters, while vectors and matrices are represented in bold, employing lowercase and uppercase letters, respectively. In an arbitrary matrix $\bm{A}$, the $i$-th column is represented as $\bm{a}_i$, and $a_{ij}$ indicates the $i$-$j$ element. The notation $\bm{x}^{[t]}$ is represents the values of the vector $\bm{x}$ at time instant $t$, which is equivalent to $\bm{x}(t)$. 
Within estimation processes, the use of a hat indicates an approximation, such as $\bm{\hat{A}}$ being the approximation of $\bm{A}$. Other important notations include $\left(\bm{x}\right)^{y}$, which means that each component of vector $\bm{x}$ is raised to the power of $y$, and $|\cdot|$, which denotes cardinality if applied to sets, and the absolute value otherwise. The notation $\bm{x}_n$ is used to indicate a column vector with $n$ elements, all of them having a value of $x\in\mathbb{N}_0$. 

\noindent\textbf{Outline}:
In Section~\ref{sec:prelim} we introduce preliminaries for leak localization and factor graph optimization on which we base the methodology from Section~\ref{sec:methodology}.
This includes the specific water network factors and the main algorithm.
Section~\ref{sec:studies} describes the water distribution networks studied
on which we perform the numerical simulations in Section~\ref{sec:sim}.
We draw conclusions about the proposed methodology in Section~\ref{sec:conclusion}.

\section{Preliminaries}\label{sec:prelim}

In this section, we briefly introduce the necessary concepts for leak localization and factor graph optimization.

\subsection{Leak management}\label{sec:prelim_leak}

The delivery of water to the users is carried out through the water distribution network, i.e. a system of pressurized pipes that connect the supply tanks with the consumption points. During the supply process, water can be lost due to the appearance of leaks, which can be caused by a deficient state of the infrastructure, material defects, installation or maintenance errors, harsh environmental conditions, excessive water pressure, etc. Therefore, water utilities use leak management methods to handle the appearance of leaks to minimize water losses and service degradation. The leak management problem can be divided into three subproblems: optimal sensor placement, leak detection, and leak localization. In this work, we will focus on the problem of leak localization, so we assume that a set of optimally placed pressure, flow and/or demand sensors have been installed in the WDN, and that an ideal leak detection algorithm is already operational. 

\subsubsection{Background}

To properly present the leak management problem, a water distribution network is modeled as a graph $\mathcal{G} = (\mathcal{V},\mathcal{E})$, where $\mathcal{V}$ represents the set of nodes (junctions/reservoirs of the WDN) and $\mathcal{E}$ is the set of edges (pipes). The $i$-$th$ node of the graph is indicated by $\mathscr{v}_i\in\mathcal{V}$, whereas the edge connecting $\mathscr{v}_i$ (source) to $\mathscr{v}_j$ (sink) is given as $\mathscr{e}_k = \mathscr{e}_{ij} \in \mathcal{E}$. The graph is then composed of $n = |\mathcal{V}|$ nodes and $m = |\mathcal{E}|$ edges. Concerning the network's physical variables, we define $\bm{p}\in\mathbb{R}^n$ as the nodal pressure vector, so $\bm{h} = \bm{p}+\bm{e}$ is the nodal hydraulic head, where $\bm{e}\in\mathbb{R}^n$ is the nodal elevation vector. In addition, $\bm{q}\in\mathbb{R}^m$ is the pipe flow vector, and $\bm{d}\in\mathbb{R}^n$ is the vector of nodal demands. The relationship between nodal heads and pipe flows is given through an experimentally-obtained equation of the form:

\begin{equation}\label{eq:flow-head}
    |h_i - h_j| = \tau_{k}q_{k}^{\nu},
\end{equation}

\noindent where $h_i$ is the hydraulic head at $\mathscr{v}_i$, $q_k=q_{ij}$ is the flow traversing $\mathscr{e}_k$ from $\mathscr{v}_i$ to $\mathscr{v}_j$, and $\tau_{k}=\tau_{ij}$ and $\nu$ are the pipe resistance coefficient and the flow exponent. Our work uses the Hazen-Williams equation \cite{Brater1996}
with $\tau_{k} = 10.674 \frac{\rho_{k}}{\xi_{k}^{\nu} \psi_{k}^{4.87}}$, 
where $\rho_{k}, \xi_{k}$ and $\psi_{k}$ are the length, roughness and diameter of the pipe represented by $\mathscr{e}_k$, and $\nu=1.852$.




\subsubsection{Leak localization}

Consider an arbitrary WDN whose steady-state hydraulic state is defined as $\bm{x}(t)=[\bm{h}^\top(t)\:\:\bm{q}^\top(t)\:\:\bm{d}^\top(t)]^{\top}\in\mathbb{R}^{n_x}$, where $n_x=2n+m$ and $t$ is the time instant. The installed sensors provide direct measurements of the states subset $\bm x_S(t) = \bm S \bm x(t)$, where $\bm S \in \mathbb{R}^{n_S\times n_x}$ is the binary measurement matrix encoding the measured states, $n_S$ being the total number of sensors.

The leak localization problem consists of pinpointing the location of the existing leak/s over the network. In a data-driven context, this process would only require:
data gathered from the sensors, i.e., $\bm X_S=\left[\bm{x}_S(t)\:\:\bm{x}_S(t+1)\;\ldots\;\bm{x}_S(t+T)\right]\in \mathbb{R}^{n_S\times T+1}$ with $T+1$ being the size of the data collection time window\footnote{The localization process starts after the leak is detected, and thus the data collected during the period from $t$ to $t+T$ always includes at least one leak.};
data from a leak-free reference scenario $\overline{\bm X}_{S}=\left[\bm{x}_S(t_{nom})\:\:\bm x_S(t_{nom}+1)\;\ldots\;\bm{x}_S(t_{nom}+T)\right]$, where $t_{nom}$ denotes a time instant with no leak detected, and with similar boundary conditions to $t$, i.e. demand patterns, inflow to the network, tank/s volumes, etc.;
structural information, encoded in $\mathcal{G}$.
Thus, the leak localization process can be defined by the following expression:

\begin{equation}\label{eq:localization}
    \mathcal{V}_L = \mbox{f}\left(\bm{X}_S,\overline{\bm X}_{S},\mathcal{G}\right),
\end{equation}

\noindent where $\mathcal{V}_L\subseteq\mathcal{V}$ is the set of leak candidate nodes and $\mbox{f}(\cdot)$ is the leak localization function.




\subsection{Factor Graph Optimization}\label{sec:prelim_fgo}

Factor graphs\cite{Kschischang01_FactorGraphSumProdAlgo} are bipartite graphs $F = (U,V,E)$ where edges $E$ connect factor nodes $U$ to variable nodes $V$.
The variable nodes are data providers to the factor nodes.
Factor nodes represent the data models
and usually require solving computationally intensive tasks
such as nonlinear optimization problems in order to produce the next data item.

What sets aside factor graph optimization algorithms
from similar state estimation approaches,
such as Kalman and Particle filters \cite{elfring2021particle}, where the entire state history is assumed to be adequately summarized by the state's value at the previous time instant,
is the fact that the solution updates all states (past and present)
unlike the others which only update the present state.
FGO problems are usually solved with the forward/backward algorithm,
visiting each graph node exactly twice (or once per pass),
updating the variables, and providing consistency across states.
A single forward/backward iteration is sufficient most of the time.
The main computation is performed by the factor nodes,
where we usually have to solve a nonlinear problem based on the neighboring variable nodes.

In the leak localization scenario,
we are interested in estimating the network state $x$
from available sensor readings
such as
demands $d$ and heads $h$.
In Figure~\ref{fig:fgo-example},
variables nodes are denoted with circles
and the factor nodes are denoted with squares.
We model the current demand $d_0$ and head $h_0$ sensor readings
as variable nodes.
The previous $x_0$ and current (estimated) water network state $x_1$ are also variables.
The red factor function (\begin{tikzpicture}[baseline=-2pt]\node[draw,minimum width=1pt,line width=1pt,red,fill=red!20, scale=1] at(0,1pt){};\end{tikzpicture}) in the left panel takes the prior state together with the new sensor readings to produce the estimated state $x_1$.
The factor can be implemented by most state estimation techniques,
such as
interpolation methods~\cite{RomeroBen2022, Irofti2023, Zhou2023},
Kalman filters~\cite{Govaers2019,Julier1997} and their combination \cite{RomeroBen2024b}.
In the right panel, each variable node is connected to a specific factor node.
In our example
the blue node (\begin{tikzpicture}[baseline=-2pt]\node[draw,minimum width=1pt,line width=1pt,blue,fill=blue!20, scale=1] at(0,1pt){};\end{tikzpicture}) uses demand measurements to update the state,
the green node (\begin{tikzpicture}[baseline=-2pt]\node[draw,minimum width=1pt,line width=1pt,SeaGreen,fill=SeaGreen!20, scale=1] at(0,1pt){};\end{tikzpicture}) estimates the heads in all nodes,
and the red in-between states node
ensures consistency between the previous and current state
based on measured versus interpolated data.

\begin{figure}
    \centering
    \includegraphics[width=\linewidth]{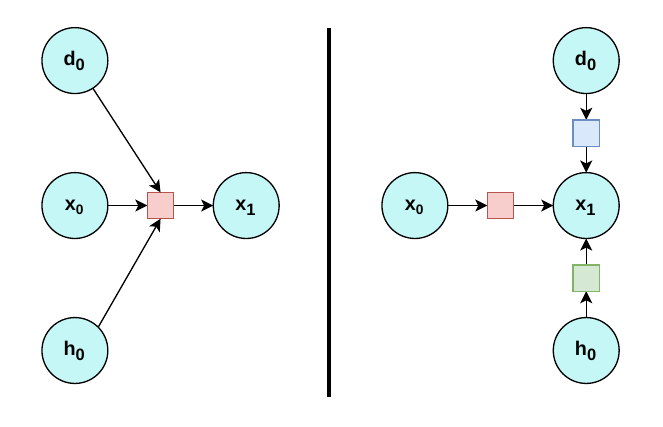}
    \caption{Factor graph example.
    Left: three variables
    (demands $d_0$, heads $h_0$ and initial state $x_0$)
    are used by the factor function depicted by the pink square to produce the next state estimate $x_1$.
    Right: the same three variables are processed now by their own factor functions whose combined output produces the next state estimate $x_1$.}
    \label{fig:fgo-example}
\end{figure}

There are multiple important aspects when designing factor graphs.
Of special interest in this paper are
marginalization~\cite{Wang23_marginalization} and switchable constraints~\cite{Li18_switchable}.

To explain the concept of marginalization, let us consider the WDN scenario, where new sensor readings lead to a new system state. In a fixed time horizon, such as an hour or a day, this would lead to $N$ states, which enable the building of the factor graph by connecting $N$ elements such as those from in~Figure~\ref{fig:fgo-example}. After performing factor graph optimization, we obtain state estimates for all $N$ states. When new sensor readings arrive, the factor graph will expand accordingly, and FGO will have to be performed again on this new graph. Even though FGO updates all system states when dealing with a broad time horizon,  older states tend to remain constant while only a small recent state window fluctuates. We can take advantage of this by resuming the older states into a single prior state, thus shrinking the factor graph and gaining faster computations with limited estimation loss. Therefore, marginalization consists of the process of marginalizing the older states into a single prior state, allowing us to move across the time domain with a fixed window size.

Switchable constraints connect variables related to the same sensor across the horizon by means of factors that perform outlier detection and zero-out inconsistent readings, blocking an affectation to the corresponding state estimate. This is especially useful to eliminate readings that are malfunctioning sensor from the measurement pool \cite{seron2011fault}, and to mitigate the effect of active elements, such as valves or pumps, which may cause sudden large jumps in the value of hydraulic variables such as pressure or flow.


Solving the factor graph optimization problem
is often approached through Maximum a Posteiori Inference (MAP)~\cite{Dellaert17_book}
leading to the non-linear least-squares problem
$x^\star = \arg\min_x \sum_i \norm{f_i(x_i) - z_i}_{\Sigma_i}^2$
where $x$ are the unknown state variables,
$z$ are the (sensor) measurements
with associated noise covariances $\Sigma$,
and $f$ the factors.
In practice, we search for a proper $\Sigma$
when the noise model is unknown.

\section{Methodology}\label{sec:methodology}

In this section, we present our methodology regarding interpolation, the proposed WDN-specific factors together with the resulting factor graphs,
and our leak localization algorithm.

\subsection{Interpolation}\label{sec:interpolation}

In the context of data-driven leak localization, interpolation-based techniques are composed of two main stages: state estimation and leak localization. The former retrieves the complete hydraulic state of the network, defined by a certain hydraulic variable (or set of variables), from the available measurements and structural information. The latter compares leak and leak-free data in order to pinpoint the leak location. Recalling \eqref{eq:localization}, it can be adapted as 
$\mathcal{V}_L = \mbox{f}\left(\bm{X}_S,\overline{\bm X}_{S},\mathcal{G}\right) = \mbox{f}_L\left(\mbox{f}_I(\bm{X}_S,\mathcal{G}),\mbox{f}_I(\overline{\bm X}_{S},\mathcal{G})\right)$
for the case of interpolation-based strategies,
where $\mbox{f}_I(\cdot)$ is the interpolation function and $\mbox{f}_L(\cdot)$ is the localization function. Note that $\hat{\bm X}=\mbox{f}_I(\bm{X}_S,\mathcal{G})$, with $\hat{\bm X}$ being the reconstructed version of $\bm X$ (and an analogue decomposition can be applied in the leak-free case).

\begin{remark}
    In practice, most interpolation-based methods use nodal hydraulic head (or pressure) as the WDN state representative variable. As explained in Section \ref{sec:intro}, this is justified by the advantages of pressure sensors over other types of metering devices, such as lower cost and easier installation, as well as the high sensitivity of pressure to leaks. Thus, from now on, we consider that $\bm x (t) = \bm h(t)$. \eor
\end{remark}

Previously, we developed an interpolation technique called Graph-based State Interpolation (GSI) \cite{RomeroBen2022}. This method, following the structure of $\mbox{f}_I(\bm{X}_S,\mathcal{G})$, only requires hydraulic head data from the sensorized nodes and the network structure and consists of solving the quadratic programming problem:

\begin{align}\label{eq:GSI_opt}
\bm{\hat{h}}=\argmin_{\bm{h}, \gamma} \quad & \frac{1}{2}\big[\bm{h}^T\bm{L}\bm{D}^{-2}\bm{L}\bm{h}+\chi \gamma ^2\big],\\
\label{eq:GSI_opt_b}\textrm{s.t.} \quad & \hat{\bm{B}} \bm{h}\leq \gamma \cdot\bm{1}_{m},\ \gamma > 0,\ \bm{S}_p\bm{h}=\bm{h}_{s},  
\end{align}

\noindent where $\bm{h}=\bm{h}(t)\in \mathbb{R}^{n}$ is a notation relaxation to improve readability. 
Moreover, $\hat{\bm{h}}$ is the reconstructed state (hydraulic heads) and $\bm{L}=\bm D-\bm W$ is the Laplacian matrix of $\mathcal{G}$, which is a weighted graph with an associated weighted adjacency matrix $\bm W\in \mathbb{R}^{n\times n}$ (defining $w_{ij}=\frac{1}{\rho_k}$ if $\mathscr{e}_k=\mathscr{e}_{ij}\in\mathcal{E}$, and $w_{ij}=0$ otherwise) and a degree matrix $\bm D\in \mathbb{R}^{n\times n}$. $\hat{\bm{B}}\in \mathbb{R}^{m\times n}$ is an approximate incidence matrix of $\mathcal{G}$, calculated as explained in Algorithm 1 of \cite{Irofti2023} and $\gamma\in \mathbb{R}$ is a slack variable for the inequality constraints. $\bm S_p\in \mathbb{R}^{n_s\times n}$ is the sensorization matrix, which encodes the sensor locations (with $n_s$ being the number of pressure sensors), $\bm h_s\in \mathbb{R}^{n_s}$ is the vector of head measurements and $\chi\in \mathbb{R}$ is a weight measuring the relative importance between the two sub-objectives. In summary, GSI pursues two goals: (1) minimization of the square difference between the state of a node and its predicted state by a weighted linear combination of its neighbors' state, and (2) maximization of the similarity between the flow directionality encoded in the reconstructed state and the directionality indicated by $\hat{\bm{B}}$, by means of minimization of the squared slack variable $\gamma^2$. 

The FGO-based methodology proposed in this article requires an interpolation operation that can be integrated into the FGO structure. Therefore, we present a novel version of GSI, denoted as Fast GSI or FGSI.

\begin{prop}
    The solution of the GSI optimization problem in \eqref{eq:GSI_opt} can be approximated by a single matrix multiplication:
    \begin{equation} \label{eq:proposition1_1}
        \hat{\bm{h}} \approx (\mu_L\bm{L}\bm{D}^{-2}\bm{L} + \bm{S}_p^{\top}\bm{S}_p)^{-1}\bm{S}_p^{\top}\bm{h}_s = \bm{P}_s\bm{h}_s,
    \end{equation}
    \noindent where $\mu_L\in \mathbb{R}$ is a relative weight and $\bm{P}_s \in \mathbb{R}^{n\times n_s}$ is the computed interpolation matrix.
\end{prop}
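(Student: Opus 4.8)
The plan is to recover the closed form \eqref{eq:proposition1_1} through two successive relaxations of the constrained quadratic program \eqref{eq:GSI_opt}--\eqref{eq:GSI_opt_b}, reducing it to an unconstrained strictly convex quadratic whose unique minimizer is read off from its normal equations. The two relaxations are precisely where the approximation sign originates, so the argument is a controlled derivation rather than an exact identity.

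First I would discard the flow-directionality objective, i.e.\ the inequality constraint $\hat{\bm B}\bm h \leq \gamma\bm 1_m$ together with the slack term $\chi\gamma^2$ and the variable $\gamma$ itself. This leaves the smoothness-only problem $\argmin_{\bm h}\tfrac12 \bm h^\top \bm L\bm D^{-2}\bm L\bm h$ subject to the hard measurement constraint $\bm S_p\bm h = \bm h_s$. Next I would relax this hard equality into a quadratic penalty $\tfrac12\norm{\bm S_p\bm h - \bm h_s}^2$ and reweight the smoothness term by $\mu_L$, yielding the unconstrained objective $J(\bm h) = \tfrac12\mu_L \bm h^\top \bm L\bm D^{-2}\bm L\bm h + \tfrac12\norm{\bm S_p\bm h - \bm h_s}^2$. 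Setting $\nabla J(\bm h) = \mu_L\bm L\bm D^{-2}\bm L\bm h + \bm S_p^\top(\bm S_p\bm h - \bm h_s) = \bm 0$ and collecting terms gives the normal equations $(\mu_L\bm L\bm D^{-2}\bm L + \bm S_p^\top\bm S_p)\bm h = \bm S_p^\top\bm h_s$, which upon inversion is exactly \eqref{eq:proposition1_1}.

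The one genuine technical point is invertibility of $\bm M = \mu_L\bm L\bm D^{-2}\bm L + \bm S_p^\top\bm S_p$, for which I would prove positive definiteness. For any $\bm v$ one has $\bm v^\top\bm M\bm v = \mu_L\norm{\bm D^{-1}\bm L\bm v}^2 + \norm{\bm S_p\bm v}^2 \geq 0$, with equality only if simultaneously $\bm L\bm v = \bm 0$ and $\bm S_p\bm v = \bm 0$. For a connected $\mathcal G$ the Laplacian null space is spanned by $\bm 1_n$, so $\bm v = c\bm 1_n$; but $\bm S_p\bm 1_n \neq \bm 0$ whenever at least one pressure sensor is present, forcing $c = 0$ and hence $\bm v = \bm 0$. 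Thus $\bm M\succ 0$, the minimizer is unique, and $\bm P_s = \bm M^{-1}\bm S_p^\top$ is well defined.

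The hard part is justifying the approximation signs, not the algebra. Relaxing the equality constraint is the milder step: as $\mu_L\to 0$ the penalty dominates and forces $\bm S_p\bm h\to\bm h_s$, recovering the original hard constraint, so $\mu_L$ merely trades measurement fidelity against smoothness. Dropping the directionality inequality is the more aggressive approximation, and I would argue, as in the discussion of GSI's two sub-objectives, that it refines the reconstructed state only secondarily relative to the dominant Laplacian-smoothness term, so its omission perturbs $\hat{\bm h}$ by a controlled amount; a fully rigorous error bound would require structural assumptions on $\hat{\bm B}$ and the inequality active set, which is exactly why the statement is phrased with $\approx$ rather than equality.
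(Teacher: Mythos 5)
Your proposal follows essentially the same route as the paper's proof: first drop the directionality terms (the inequality constraints and the $\gamma$ objective), then relax the measurement equality $\bm{S}_p\bm{h}=\bm{h}_s$ into a quadratic penalty weighted against the $\mu_L$-scaled smoothness term, and read off the closed form from the stationarity condition of the resulting unconstrained quadratic. The only substantive difference is that you also prove positive definiteness of $\mu_L\bm{L}\bm{D}^{-2}\bm{L} + \bm{S}_p^{\top}\bm{S}_p$ (for connected $\mathcal{G}$, $\mu_L>0$, and at least one pressure sensor), which the paper's proof implicitly assumes when it inverts this matrix --- a worthwhile strengthening, but not a different approach.
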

\begin{proof}
    In order to obtain an approximate solution to the GSI problem, the constraints in \eqref{eq:GSI_opt} can be relaxed or removed to lead to an unconstrained optimization problem. First, the directionality terms are eliminated from the problem, removing the inequality constraints and the objective terms related to $\gamma$. Then, the sensorization constraint is relaxed, converting the equality constraint into an objective term minimizing the differences between measurements and measured states, in a similar fashion to penalty methods \cite{Eason2015}:
    \begin{multline}\label{eq:proposition4_proof_1}
            (\bm{S}_p\bm{h} - \bm{h}_s)^{\top}(\bm{S}_p\bm{h} - \bm{h}_s) = (\bm{h}^{\top}\bm{S}_p^{\top} - \bm{h}_s^{\top})(\bm{S}_p\bm{h} - \bm{h}_s)=\\
            \bm{h}^{\top}\bm{S}_p^{\top}\bm{S}_p\bm{h} - 2\bm{h}_s^{\top}\bm{S}_p\bm{h} + \bm{h}_s^{\top}\bm{h}_s.
        \end{multline}
        Thus, the solution of the constrained problem in \eqref{eq:GSI_opt} is approximated by the solution of the next unconstrained problem
        $\hat{\bm{h}} \approx \argmin_{\bm{h}} \: \frac{1}{2}\Big[\bm{h}^{\top}\big(\mu_L\bm{L}\bm{D}^{-2}\bm{L} + \bm{S}_p^{\top}\bm{S}_p\big)\bm{h}\Big] - \bm{h}_s^{\top}\bm{S}_p\bm{h}$,
    where the term $\frac{1}{2}\bm{h}_s^{\top}\bm{h}_s$ from \eqref{eq:proposition4_proof_1} is not included because it is constant for a certain $\bm{h}_s$. Moreover, $\mu_L$ is a parameter that regulates the relative influence of the network structure, encoded in $\bm{L}\bm{D}^{-2}\bm{L}$, and the sensorization structure, represented by $\bm{S}$. The optimal solution to this problem is computed by setting the derivative with respect $\bm{h}$ to zero
    $\left(\mu_L\bm{L}\bm{D}^{-2}\bm{L} + \bm{S}_p^{\top}\bm{S}_p\right)\bm{h}_F - \bm{S}_p^{\top}\bm{h}_s = \bm{0}_n$,
    leading to the result
    $\bm{h}_F = \left(\mu_L\bm{L}\bm{D}^{-2}\bm{L} + \bm{S}_p^{\top}\bm{S}_p\right)^{-1}\bm{S}_p^{\top}\bm{h}_s$
    where $\bm{h}_F \approx \hat{\bm{h}}$ is the optimal head estimate from \eqref{eq:proposition1_1} that approximates  \eqref{eq:GSI_opt}. 
\end{proof}

\begin{remark}\label{remark:FGSI}
    FGSI leads to an additional advantage with respect to GSI: the explicit interpolation of pressure residuals. GSI includes directionality terms, which cannot be directly related to pressure residuals, as some flow directions may change between leak-free and leak scenarios. FGSI avoids these terms, and its multiplicative nature facilitates its use for residual interpolation. \eor
\end{remark}

\subsection{Water Distribution Network Factors}\label{sec:water_factors}

The proposed methodology is defined by the factor graph presented in Figure \ref{fig:fgo-arch}. Within this architecture, a handful of factors are defined to encode the relationships between different states and fixed values. This set of states contains the head state $\bm h^{[t]}$ (with $t$ indicating the time instant)\footnote{Note that, for clarity, we substitute the $\bm x(t)$ notation for $\bm x^{[t]}$.}, which represents the estimation of the nodal hydraulic head vector; the demand state $\bm d^{[t]}$, which represents the estimation of the nodal consumption vector; and the localization state $\bm l^{[t]}$, which represents the estimation of pressure residuals, used in this case as a localization likelihood index. We discuss the designed factors in the following.

\subsubsection{Temporal evolution}\label{subsubsec:temp_factor}

This factor is indicated through a red square (\begin{tikzpicture}[baseline=-2pt]\node[fstate, scale=.8] at(0,1pt){};\end{tikzpicture}) in Figure \ref{fig:fgo-arch} and represents the relationship between the states of two consecutive time instants. The core idea is to impose the evolution of sensorized nodes, as we have access to all the measurements within the considered time window. This applies to both head and demand states, although the way of handling non-sensorized nodes differs between them. We define the difference between consecutive head states as
\begin{equation}\label{eq:temp_head_factor}
    \begin{split}
        \delta\bm h^{[t,t+1]}=\hat{\bm{h}}^{[t+1]}-\hat{\bm{h}}^{[t]}=\bm P_s\left(\bm h_s^{[t+1]}-\bm h_s^{[t]}\right),\\
        \delta\bm h^{[t,t+1]}_s=\bm h_s^{[t+1]}-\bm h_s^{[t]},
    \end{split}
\end{equation}
\noindent where $\delta\bm h^{[t,t+1]}$ is the difference between consecutive head states, composed in two stages: first the complete vector is defined as the difference between the interpolated states (using FGSI through $\bm P_s$ and considering Remark \ref{remark:FGSI}), and then the sensorized states are corrected using the measurements at the two consecutive time instants. In this way, the temporal evolution is given by the difference in measurements, and the interpolation is used to provide a better estimation of the difference of state in unmetered nodes. For WDNs, the actual hydraulic head vector at a single time instant depends only on the demand pattern and the boundary conditions, so there is no explicit connection between the hydraulic head vectors of two consecutive time instants, except for the measured evolution.

For the demand evolution, the factor can be expressed as:
\begin{align}\label{eq:temp_demand_factor}
        \delta\bm d^{[t,t+1]}=\bm 0_n, &&
        \delta\bm d^{[t,t+1]}_s=\bm d_s^{[t+1]}-\bm d_s^{[t]},
\end{align}
\noindent where $\delta\bm d^{[t,t+1]}$ is the difference between consecutive demand states, initialized with zeros and then filled with measurement differences at the indices of the consumption-metered nodes.

\subsubsection{Structural evolution}\label{subsubsec:struc_factor}

Apart from the information provided by the relation between consecutive time instants, we can establish a relationship between the heads of neighboring nodes at each individual instant. This is performed through the structural evolution factor, indicated by a green square (\begin{tikzpicture}[baseline=-2pt]\node[fhfixed, scale=.8] at(0,1pt){};\end{tikzpicture}) in Figure \ref{fig:fgo-arch}. In this way, we explicitly incorporate interpolation into the estimation process of each single head state. To this end, we set a factor with the following expression:
\begin{equation}\label{eq:struc_factor_1}
    \bm h^{[t]}_{u} =\bm S_u\hat{\bm{h}}^{[t]}= \bm S_u \bm P_s \bm h_s^{[t]},
\end{equation}
\noindent where $\bm S_u\in\mathbb{R}^{(n-n_s)\times n}$ is the matrix that encodes the unmetered nodes. In this way, we relate the unknown head states at a certain time instant with the interpolated values from the measured heads at that specific instant.

\subsubsection{Demand measurement integration}\label{subsubsec:demand_measurements}

In the same spirit of the head interpolation integration, demand measurements can be explicitly incorporated through a factor, to give not only the demand difference between consecutive time instants but also the actual value at each time instant. This factor is indicated through a gray square (\begin{tikzpicture}[baseline=-2pt]\node[ffixed, scale=.8] at(0,1pt){};\end{tikzpicture}) in Figure \ref{fig:fgo-arch}. To this end, the following expression is considered:
\begin{equation}\label{eq:demand_measurement_factor_1}
    \bm d^{[t]}_{s} =\bm S_d\bm{d}^{[t]},
\end{equation}
\noindent where $\bm S_d\in\mathbb{R}^{n_d\times n}$ is the matrix that encodes the nodes with demand sensors, with $n_d$ being its amount. 

\begin{remark}
    Therefore, in our notation, $S_p, S_d, S_u$ represents node selection, corresponding to the sets of pressure ($S_p$) and demand ($S_d$) sensors, as well as unknown ($S_u$) nodal pressures. These matrices can be obtained from the identity matrix by selecting the corresponding rows. \eor
\end{remark}

\subsubsection{Zero-sum of demands}\label{subsubsec:zero_sum_demands}

The assumption of the leak being modeled as an extra demand in the leaky node allows us to define a "constraint" with the aim of improving the demand estimation. Specifically, the characteristics of water distribution networks ensure that the sum of the water entering the WDN through the reservoirs must be equal to the consumed (plus lost) water. In this way, a zero-sum demand factor, indicated with an orange square (\begin{tikzpicture}[baseline=-2pt]\node[fzero, scale=.8] at(0,1pt){};\end{tikzpicture}) in Figure \ref{fig:fgo-arch}, can be defined as:
\begin{equation}\label{eq:zero_sum_demand}
     \bm{1}_n^{\top} \bm d^{[t]}_{s} = 0,
\end{equation}
where $\bm{1}_n$ is a column vector of ones, with size $n$.


\subsubsection{Demand-head relation}\label{subsubsec:demand_head}

As previously explained in Section \ref{sec:prelim_leak}, the relationship between the heads of neighboring nodes, and the flow conveyed through the pipe connecting them, is given by an empirically-based expression, such as the Hazen-Williams equation, posed in \eqref{eq:flow-head}. Considering the notation defined in this section, this equation can be expressed in matricial form as 
    $\bm{B}^{[t]}\bm{h}^{[t]}=\left(\bm{T}\bm{q}^{[t]}_h\right)^{1.852}$,
where $\bm{B}^{[t]}\in\mathbb{R}^{m\times n}$ is the analytical incidence matrix, calculated from the actual sign of head differences of neighboring nodes in $\bm{h}^{[t]}$ (so that $\bm{B}^{[t]}\bm{h}^{[t]}\geq 0$), $\bm{q}_h^{[t]}\in\mathbb{R}^m$ is the head-based flow vector, i.e. the flow vector that can be derived from the head states; and $\bm{T}\in\mathbb{R}^{m\times m}$ is the resistance coefficient diagonal matrix, with the $k$-$th$ diagonal element being $\bm{\tau}_k$, defined in Section \ref{sec:prelim_leak}. 
By manipulating the Hazen-Williams matrix form,
the head-based flows can be obtained as:
\begin{equation}\label{eq:HW2}
    \bm{q}^{[t]}_h = \left(\bm{T}^{-1}\bm{B}^{[t]}\bm{h}^{[t]}\right)^{\frac{1}{1.852}}.
\end{equation} 
In addition, the relationship between the demand of a node and the flows of the pipes connected to it is given by the mass conservation equation:
\begin{equation}\label{eq:mass_conserv}
    \bm{d}^{[t]}_h = -\left(\bm{B}^{[t]}\right)^{\top}\bm{q}^{[t]}_h,
\end{equation} 
\noindent where $\bm{d}^{[t]}_h\in\mathbb{R}^n$ is the head-based demand vector
\footnote{The minus sign is required by demands being considered as outflows to the node and the selected sign convention within the definition of $\bm{B}^{[t]}$.}.
Thus, combining \eqref{eq:HW2} and \eqref{eq:mass_conserv}, the expression relating head-based nodal demand and heads is
    $\bm{d}^{[t]}_h = -\left(\bm{B}^{[t]}\right)^{\top}\left(\bm{T}^{-1}\bm{B}^{[t]}\bm{h}^{[t]}\right)^{\frac{1}{1.852}}$.

In order to implement non-linear factors, we need to define an error function and its Jacobian with respect to the involved states. In the case of the demand-head relation, indicated in Figure \ref{fig:fgo-arch} with blue squares (\begin{tikzpicture}[baseline=-2pt]\node[fdstate, scale=.8] at(0,1pt){};\end{tikzpicture}), the error function is
\begin{equation}\label{eq:err_demand-head}
    \bm{e}_d^{[t]} = \bm{d}^{[t]} +\left(\bm{B}^{[t]}\right)^{\top}\left(\bm{T}^{-1}\bm{B}^{[t]}\bm{h}^{[t]}\right)^{\frac{1}{1.852}},
\end{equation} 
where $\bm{d}^{[t]}$ is the actual demand state, computed by FGO. The Jacobian of \eqref{eq:err_demand-head} with respect to $\bm{d}^{[t]}$ is $\bm{J}_d^{[t]}=\bm{I}_n$, while the Jacobian of \eqref{eq:err_demand-head} with respect to $\bm{h}^{[t]}$, i.e., $\bm{J}_h^{[t]}$, is composed of the set elements of $\iota_{i,j}^{[t]}$:
\begin{multline}
    \iota_{i,j}^{[t]} = \biggl\{
    v_{ik}^{[t]}
      \left(
        \frac{1}{1.852}
        \left(
          \frac{1}{\tau_{kk}}
          \bm{b}_{k}^{[t]}
          \bm{h}^{[t]}
        \right)^{\frac{1}{1.852}-1}
      \right)\frac{b_{kj}}{\tau_{kk}}
    \ \biggr|\ \\
    \forall k \text{ s.t. } 
      \exists e_k = e_{ij} \in \mathcal{E}
    \biggr\},
    \end{multline}

\noindent where
$i$ iterates the demands,
$j$ iterates the heads,
and $k$ the pipes connecting to the nodes $i$ and $j$.
Here we denoted
$\bm{V}^{[t]}=\left(\bm{B}^{[t]}\right)^{\top}$,
$\tau_{kk}$ is the $k$-$th$ element of the diagonal of $\bm{T}$
and
$\bm{b}_{k}^{[t]}$ is the $k$-$th$ row of $\bm{B}^{[t]}$.
The head Jacobian $\bm{J}_h^{[t]}$ becomes:
\begin{equation}\label{eq:Jacobian}
[J_h]^{[t]}_{ij}= 
    \begin{cases}
        \iota_{i,j}^{[t]},& i\neq j,  \\ 
        \sum_{u=1}^n \iota_{u,j}^{[t]},& i= j, \\
        0, & \mbox{otherwise}.
    \end{cases}
\end{equation}

\subsubsection{Pressure residual computation}\label{subsubsec:residual_comp}

In order to include localization within the FGO approach, a localization state is defined as $\bm l^{[t]}$. This state represents the pressure residuals, that is, the difference between the leak and leak-free head vectors. This metric is usually used as an indicator of the presence of a leak \cite{Perez2014}. In order to compute the pressure residuals at each time instant, we define the following function and integrate it into the associated yellow (\begin{tikzpicture}[baseline=-2pt]\node[fbar, scale=.8] at(0,1pt){};\end{tikzpicture}) factors from Figure \ref{fig:fgo-arch}
\begin{equation}\label{eq:loc_factor_1}
        \bm{l}^{[t]}=\bm{h}^{[t]}-\overline{\bm{h}}^{[t]},
\end{equation}
where $\overline{\bm{h}}^{[t]}$ is the leak-free head state at time $t$. This leak-free state vector is estimated beforehand (through a previous FGO run), and its used within the localization FGO as a fixed value.

\subsubsection{Leak localization constraint}\label{subsubsec:leak_loc_const}

A final factor is used to relate the state of consecutive localization states in time, through the assumption that pressure residuals should mostly be similar if the boundary conditions are similar, as the leak is considered to be constant along the scenario (the leak size may change, but not the location). Therefore, we impose through a factor (indicated with purple squares \begin{tikzpicture}[baseline=-2pt]\node[flstate, scale=.8] at(0,1pt){};\end{tikzpicture} in Figure \ref{fig:fgo-arch}) that consecutive localization states must be as similar as possible:
\begin{equation}\label{eq:loc_factor_2}
        \bm{l}^{[t+1]}=\bm{l}^{[t]}.
\end{equation}
The FGO output is used to obtain the final localization metric
\begin{equation}\label{eq:norm_loc}
    \tilde{\bm{l}}_{g} = -\frac{\bm{l}^{[0]}-\mbox{min}\left(\bm{l}^{[0]}\right)}{\mbox{max}\left(\bm{l}^{[0]}\right)-\mbox{min}\left(\bm{l}^{[0]}\right)},
\end{equation}

\noindent considering that we select $\bm{l}^{[0]}$ because the localization states are forced to be almost equal, and the negative sign is used in to impose that the node with the highest leak likelihood is associated with the highest residual value.


\subsection{Estimation and localization algorithm}\label{sec:method_estimation}
\begin{figure*}
    \centering
    \includegraphics[width=\linewidth]{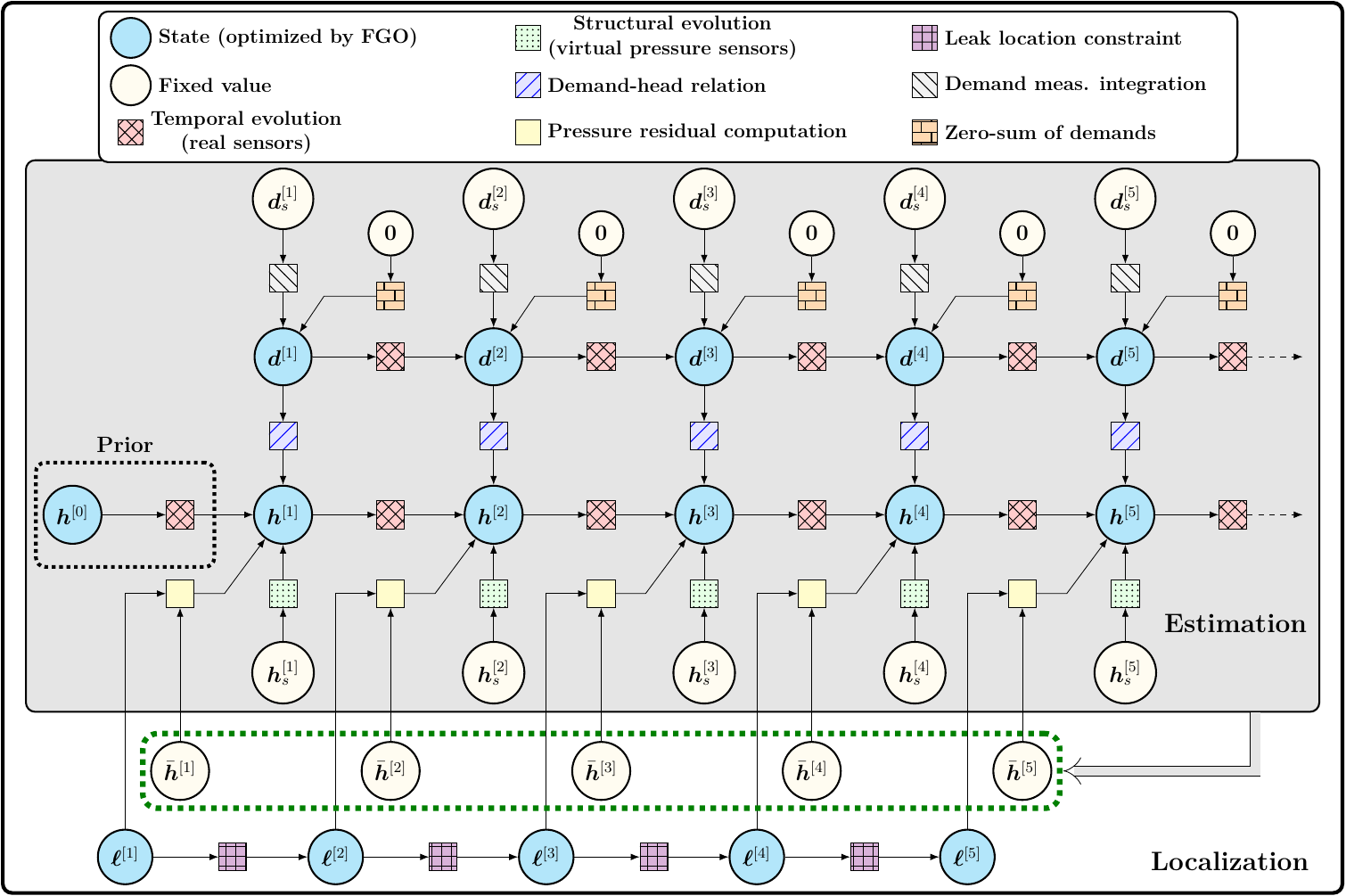}
    \caption{Estimation and localization factor graphs
    for five time instances corresponding to five sensor readings
    starting from prior $\bm{h}^{[0]}$.
    The arrows depict the passage of time.
    The gray area presents the estimation process of the water network behavior in a leak-free scenario producing the head values $\bm{\overline{h}}^{[t]}$ grouped inside the green rectangle.
    The white area presents the specific localization factor graph elements
    which include the leak-free head values
    and the new localization states $\bm{l}^{[t]}$.
    These will be used in the optimization process
    to estimate the network heads $\bm{h}^{[t]}$
    and the localization states
    in order to obtain the leaky network nodes.}
    \label{fig:fgo-arch}
\end{figure*}

\begin{algorithm}
\caption{Factor Graph Leak Localization (FGLL)}\label{alg:fgo}
\begin{algorithmic}[1]
\STATE Add prior temporal evolution factor $\bm{h}^{[0]} \to \bm{h}^{[1]}$
\FOR{$t \gets 1$ to $T$}
\STATE Add temporal evolution factor $\bm{h}^{[t]} \to \bm{h}^{[t+1]}$ and $\bm{d}^{[t]} \to \bm{d}^{[t+1]}$\label{lst:line:temporal}
\STATE Add structural evolution factor $\bm{h}_s^{[t]} \to \bm{h}^{[t]}$
\STATE Add demand measurement factor $\bm{d}_s^{[t]} \to \bm{d}^{[t]}$
\STATE Add zero-sum demand constraint factor for $\bm{d}^{[t]}$
\STATE Add demand-head factor $\bm{d}^{[t]} \to \bm{h}^{[t]}$\label{lst:line:demand2head}
\ENDFOR
\STATE Insert pressure reading $\bm{h}_s^{[t]}$  for time instances $t \in [T]$\label{lst:line:pressure}
\STATE Insert demand reading $\bm{d}_s^{[t]}$ for time instances $t\in[T]$\label{lst:line:demand}
\STATE Perform leak-free estimation FGO: obtain $\overline{\bm{h}}^{[t]}, \; \forall t \in [T]$ 
\FOR{$t \gets 1$ to $T$}
\STATE Add factors as in lines L\ref{lst:line:temporal}--L\ref{lst:line:demand2head}
\STATE Add pressure residual factor $(\bm{l}^{[t]}, \overline{\bm{h}}^{[t]}) \to \bm{h}^{[t]}$
\STATE Add leak localization constraint $\bm{l}^{[t]} \to \bm{l}^{[t+1]}$
\ENDFOR
\STATE Insert current sensors readings as in lines L\ref{lst:line:pressure} -- L\ref{lst:line:demand}
\STATE Perform localization FGO: obtain leak states $\bm{l}^{[t]}, \forall t\in[T]$
\STATE Compute $\tilde{\bm{l}}_{g}$ using \eqref{eq:norm_loc}
\end{algorithmic}
\end{algorithm}

Algorithm~\ref{alg:fgo} describes the entire process of 
leak-free state estimation
and leak localization as depicted 
in Figure \ref{fig:fgo-arch}.

Estimation
starts from the prior state $\bm{h}^{[0]}$
which can be given by the user or obtained through marginalization
from the previous time window (line 1).
Lines 2 to 8 build the leak-free estimation factor graph
depicted by the gray area in Figure~\ref{fig:fgo-arch}.
Line 3 adds the temporal evolution of the state
for heads~\eqref{eq:temp_head_factor}
and demands~\eqref{eq:temp_demand_factor}.
The demands state evolution acts also as a soft switchable constraint:
while the end goal is to include demand information in the head state (line 7),
this temporal demand evolution ensures that our demand approximation is consistent between time instances.
Next,
line 4 performs the interpolation through
head state structural evolution~\eqref{eq:struc_factor_1}
and line 5 performs the demand measurement factor~\eqref{eq:demand_measurement_factor_1}.
Line 6 adds the zero-sum demands constraint factor~\eqref{eq:zero_sum_demand}
and line 7 adds the Hazen-Williams based non-linear factor which translates demands to be incorporated in the head state~\eqref{eq:err_demand-head}.
The factor graph is now complete and can be used for time windows of length $T$ by inserting the sensor readings into leaf nodes $h_s$ and $d_s$ (lines 9-10)
and performing factor graph optimization (line 11).
The optimization process provides leak-free head states $\overline{\bm{h}}$
for all water network nodes (including the non-sensorized nodes)
across all time instances.
These will be used in the localization process.

Leak localization is based on the factor graph and the leak-free states obtained in the estimation process.
Lines 12--16 build the localization factor graph
depicted by the white area in Figure~\ref{fig:fgo-arch}.
Line 13 reconstructs the estimation factor graph including the prior $\bm{h}^{[0]}$.
In line 14, the localization factor combines the leak-free head state,
obtained in the FGO estimation from line 11,
with the leak localization state~\eqref{eq:loc_factor_1}.
Finally, line 15 adds the leak localization constraint factor~\eqref{eq:loc_factor_2}
which acts as a switchable constraint meant to maintain consistency across time instances such that a detected leak is propagated across states.
The factor graph is now complete, and the sensor readings
are inserted in line 17 as in the leak-free estimation case. 
The optimization of factor graph of line 18
provides us with the leak states $\bm{l}^{[t]}$ in all time instants
that will be used in line 19 to compute the localization metric described around \eqref{eq:norm_loc}. This metric is designed such that the most likely leak candidates are assigned values close to 1, whereas the unlikely candidates have a value close to zero. The localization result is obtained by means of a statistical analysis of the localization metric. Specifically, the mean and standard deviation of this vector are computed, only accepting as candidates those nodes whose localization metric value is higher or equal to the sum of those two statistics. 

Our algorithm complexity is bound by the FGO solver used in lines 11 and 18.
In our implementation, we use the incremental smoothing and mapping (iSAM2) solver~\cite{Kaess12_isam2}
which has a worst-case scenario bound of $O(\abs{\mathcal V}^3)$,
where $\abs{\mathcal V}$ is the number of factor graph variable nodes.
In practice, the authors show that the complexity is much lower.





\section{Case Studies}\label{sec:studies}


In order to assess the performance capabilities of the proposed methodology, a set of networks will be considered as benchmarks, allowing comparison with established strategies from the literature. These networks are selected to cover a broad range of characteristics, including varying scale (number of nodes/pipes), demand pattern complexity, sensor deployment, levels of uncertainty, etc. Their main properties are summarized in Table \ref{table:network-data}. To highlight that the water inflow to each scenario is expressed through the mean and standard deviation of the inflow for the considered time window, except for L-TOWN, where we consider the whole year 2018 to account for its variability\footnote{As explained in subsequent sections, we consider only Area A, and therefore the inflow is computed to account up for this area only.}. In terms of the sensorization of  networks, water inlet nodes are always considered to be sensorized both in pressure and demand.

\begin{table}
\centering
\caption{Summary of the main properties of the selected benchmarks}
\label{table:network-data}
\begin{tabular}{l|c|c|c}
 & \textbf{T-Example} 
 & \textbf{Modena} & \textbf{L-TOWN-A} \\
\hline
inflow ($\ell/s$) & 88.47$\pm$27.62 
& 399.51$\pm$119.22  & 164.62$\pm$55.55 \\
\# junctions & 9 
& 272 & 657 \\
\# pipes & 10 
& 317 & 766 \\
\# water inlets & 1 
& 4 & 2 \\
\# press. sensors & 4 
& 20 & 31 \\
\# demand sensors & 4 
& 40 & 31 \\
\end{tabular}
\end{table}

\textbf{Synthetic Example.} The first assessment of the proposed methodology is performed on an artificial benchmark, originally created to serve as a simple network example to be able to quickly prototype and test. The properties of this network, called the T-example, are presented in Table \ref{table:network-data}, and a graphical representation is shown in Figure \ref{fig:t-example}. Four sensors have been placed on the network to provide nodal head and demand data. These sensors provide pressure and demand information every 5 minutes. 

\begin{figure}
    \centering
    \includegraphics[width=\linewidth]{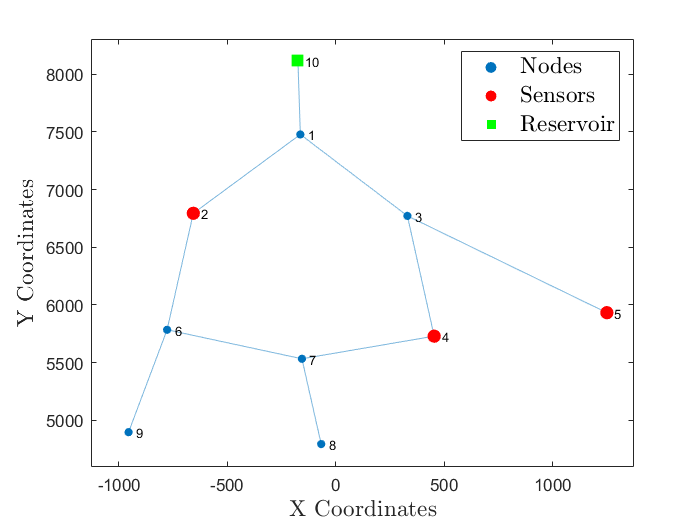}
    \caption{Graphical representation of the synthetic network known as T-example.}
    \label{fig:t-example}
\end{figure}




\textbf{Modena Network.} The water network of Modena, Italy, is a well-established benchmark in the development of leak management methods \cite{Bragalli2012, Alves2022, Irofti2023}. The network scheme is represented in Figure \ref{fig:Modena}, while its characteristics are summarized in Table \ref{table:network-data}. The scale of the network, measured in terms of number of nodes/pipes and total consumption, reflects the complexity of a problem of realistic dimension, comparable to that of real-world water distribution systems. Additionally, the benchmark is open-source, promoting transparency and reproducibility. A total of 20 pressure sensors and 40 demand sensors (20 of which coincide with the pressure sensors), provide measurements with a sample time of 1 hour.

\begin{figure}
    \centering
    \includegraphics[width=\linewidth]{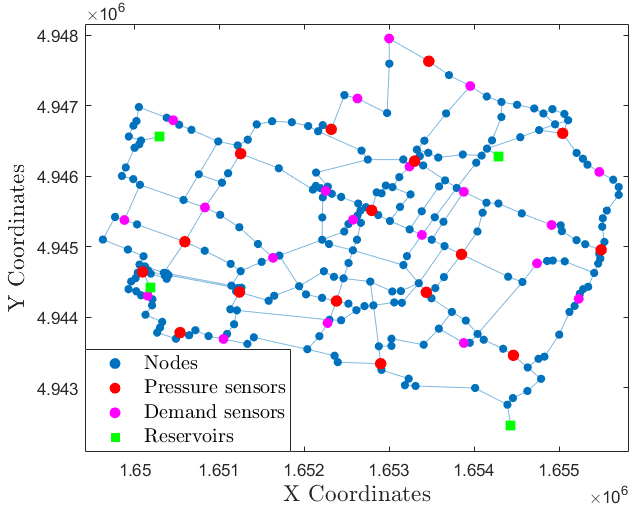}
    \caption{Graphical representation of the Modena network.}
    \label{fig:Modena}
\end{figure}

\textbf{L-TOWN Network.} L-TOWN was introduced in the Battle of Leakage Detection and Isolation Methods 2020 - BattLeDIM2020 \cite{Vrachimis2022}, organized during the 2nd International CCWI/WDSA Joint Conference. The challenge consisted of the evaluation and comparison of leak detection and localization schemes, fed with SCADA measurements from strategically placed pressure, demand, and flow sensors (as well as a hydraulic model for model-based methods).

The L-TOWN network is composed of 782 junctions, 2 reservoirs, 1 tank, 4 valves, and 905 pipes, with total pipe length of 42.6 km. The WDN is divided into three areas, depending on the nodal elevation.
\textit{Area A} is made up of 657 nodes, at elevations between 16 and 48 meters. Two water inlets feed the WDN through this zone, which then feeds Area B (connected through a Pressure Regulating Valve (PRV)) and Area C (fed through a tank filled from Area A by means of a pump). In addition, there are 29 pressure sensors and 3 flow meters.
\textit{Area B} comprises 31 nodes, all below 16 meters. In particular, Area B has only one pressure meter.
\textit{Area C} has 94 nodes are above 48 meters of elevation. There are 3 pressure sensors and 82 Automated Meter Reading (AMR) sensors, which provide demand data.

Moreover, the competition organizers provided two datasets to the participant teams: (i) 2018 dataset, which served as a training and calibration set; and (ii) 2019 dataset, which was used to evaluate the performance of the methods. However, in this work, we have evaluated the methodologies using the 2018 dataset, since the extreme multileak conditions of the 2019 dataset could hinder drawing meaningful conclusions from the results. Specifically, we will focus on Area A, as most of the leaks in 2018 occurred in this zone. Therefore, we are going to address the challenge posed by the network in Figure \ref{fig:LTOWN}, which has the properties indicated in Table \ref{table:network-data}.

\begin{figure}
    \centering
    \includegraphics[width=\linewidth]{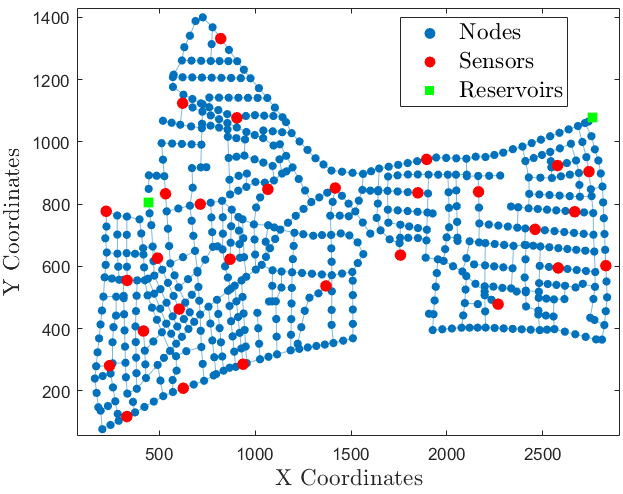}
    \caption{Graphical representation of the Area A of the L-TOWN network.}
    \label{fig:LTOWN}
\end{figure}

\section{Results}\label{sec:sim}


In order to illustrate the capabilities of the proposed methodology, this section presents a performance comparison with respect to other three data-driven state estimation methods, which are associated to a localization stage to additionally analyze the leak localization performance. In this way, both estimation and localization results are presented for each methodology in each case study, leading to a comprehensive analysis of the performance of the proposed method.

Specifically, the comparison estimation approaches are:
    (I)~GSI \cite{RomeroBen2022}, implemented as explained in Section \ref{sec:interpolation};
    (II)~UKF-AW-GSI \cite{RomeroBen2024b} which uses a physically-informed version of GSI, denoted as AW-GSI \cite{Irofti2023}, and integrates demand measurements to current used pressure data by using a UKF-based strategy; 
    (III)~GHR-S \cite{Zhou2023} which exploits graph signal processing theory using slow varying parts of the nodal heads and pseudo-measurements to improve the estimation process stability. 

The localization stage will be performed using an approach denoted as Leak Candidate Selection Method (LCSM) \cite{RomeroBen2022}. This strategy considers the leak-free and leak interpolated states as the x-y coordinates of a cloud of 2-D points, and considers the nodes associated to furthest points to the best-fitting line to the cloud as the most likely leak candidates. To this end, it also computes the mean and standard deviation of this vector of distances and uses their sum as a threshold. 

General parameters of the comparison methods must be settled, such as the directionality weight of GSI, i.e. $\chi=10^3$; the UKF standard parameters, i.e., $\alpha=10^{-3}$, $\beta=2$ and $\kappa=0$. Regarding FGLL, the noise covariance matrices for the different factors are computed as a scalar value multiplied by the identity matrix (whose size depends on the factor and benchmark). These scalars are indicated in Table~\ref{table:fgll_noise}.
Our implementation is available at \url{https://github.com/pirofti/FGLL}.

\begin{table}
\centering
\caption{Noise covariance levels for the different factors in FGLL, for the bechmarks of T-Example, Modena and L-TOWN-A.}
\label{table:fgll_noise}
\begin{tabular}{l|c|c|c}
\textbf{Factor} & \textbf{T-Example} 
 & \textbf{Modena} & \textbf{L-TOWN-A} \\
\hline
\begin{tikzpicture}[baseline=-2pt]\node[fstate, scale=1] at(0,1pt){};\end{tikzpicture} Temporal
 & $10^{-12}$ & $10^{-12}$ & $10^{-12}$ \\
\begin{tikzpicture}[baseline=-2pt]\node[fhfixed, scale=1] at(0,1pt){};\end{tikzpicture} Structural 
 & $10^{-4}$ & $10^{-4}$ & $10^{-4}$ \\
\begin{tikzpicture}[baseline=-2pt]\node[fdstate, scale=1] at(0,1pt){};\end{tikzpicture} Demand-head
 & $10^{-12}$ & $10^{-12}$ & $10^{-12}$ \\
\begin{tikzpicture}[baseline=-2pt]\node[fbar, scale=1] at(0,1pt){};\end{tikzpicture} Pressure residual
 & $10^{-3}$ & $10^{-3}$ & $10^{-3}$ \\
\begin{tikzpicture}[baseline=-2pt]\node[flstate, scale=1] at(0,1pt){};\end{tikzpicture} Leak localization
 & $10^{-5}$ & $10^{-9}$ & $10^{-5}$ \\
\begin{tikzpicture}[baseline=-2pt]\node[ffixed, scale=1] at(0,1pt){};\end{tikzpicture} Demand meas.
 & $10^{-4}$ & $10^{-4}$ & $10^{-4}$ \\
\begin{tikzpicture}[baseline=-2pt]\node[fzero, scale=1] at(0,1pt){};\end{tikzpicture} Zero-sum demands
 & $10^{-12}$ & $10^{-9}$ & $10^{-12}$ \\
\end{tabular}
\end{table}

\subsection{T-example}

The evaluation of the performance of the methods considered over T-example requires the generation of leak-free and leaky data. To this end, the hydraulic simulation tool EPANET \cite{Rossman2000} is used to simulate the corresponding scenarios, including a leak-free event and 9 leak scenarios, one per potential leak.
In order to maintain the simplicity of the synthetic scenario, a constant leak size of 0.5 $\ell/s$ is considered (representing a $\sim$0.57\% of the average inflow), and noise is not considered. Each simulation has a hydraulic time step of 5 minutes, and they last for one day, leading to 288 samples per scenario. 

Regarding the specific parameterizations for this network, the process and measurement noise covariance matrices are $Q=I_n$ and $R=10^{-4}I_{2n_s}$ respectively, where $n_s$ is the number of pressure/demand sensors, and the number of UKF internal iterations, $k_{loop}$, is settled to 200. Moreover, the GHR-S parameters for the frequency limit of the first stage, $f_p$, and the limit for the final stage, $f$, are set to $f_p = 3$ and $f=7$. 

\subsubsection{Estimation}

The estimation results are presented in Table \ref{table:estimation_Texample}. The error is computed through the root mean square error,
$RMSE(\bm x, \hat{\bm x}) = \sqrt{\frac{1}{n} \sum_{i=1}^{n} (x_i - \hat{x}_i)^2}$, where $\bm{x}$ is a generic vector. The RMSE is computed between the estimated and simulated heads at each time instant, and then the average and standard deviation are computed for all the time instants. Note that the three comparison methods operate over individual samples, i.e., they interpolate/estimate the state corresponding to a specific set of measurements, gathered at a single time instant. However, FGLL computes the estimation of all time instants at the same time.

\begin{table}[t]
\caption{RMSE of the head estimation process in T-example, comparing GHR-S, GSI, UKF-AW-GSI and FGLL.}
\label{table:estimation_Texample}
\footnotesize
\centering
\begin{tabular}{c|c|c|c|c}
\multicolumn{1}{c|}{\multirow{2}{*}{Leak ID}} & \multicolumn{4}{c}{\textbf{RMSE: $\bm{\mu\pm\sigma}$ (m)}} \\
\multicolumn{1}{c|}{} & \textbf{GHR-S}\rule{0pt}{2ex} & \textbf{GSI} & \textbf{UKF-AW-GSI} & \textbf{FGLL} \\
\hline
\textit{1} \rule{0pt}{2ex} & 1.60$\pm$0.83 & 0.86$\pm$0.45 & 1.09$\pm$0.56 & 0.87$\pm$0.43  \\
\textit{2} & 1.57$\pm$0.82 & 0.86$\pm$0.45 & 1.09$\pm$0.56 & 0.87$\pm$0.43  \\
\textit{3} & 1.62$\pm$0.84 & 0.86$\pm$0.45 & 1.09$\pm$0.56 & 0.87$\pm$0.43  \\
\textit{4} & 1.62$\pm$0.84 & 0.86$\pm$0.45 & 1.09$\pm$0.56 & 0.87$\pm$0.43  \\
\textit{5} & 1.63$\pm$0.84 & 0.86$\pm$0.45 & 1.09$\pm$0.56 & 0.87$\pm$0.43 \\
\textit{6} & 1.64$\pm$0.84 & 0.86$\pm$0.45 & 1.09$\pm$0.56 & 0.87$\pm$0.43 \\
\textit{7} & 1.62$\pm$0.83 & 0.88$\pm$0.45 & 1.12$\pm$0.57 & 0.90$\pm$0.44 \\
\textit{8} & 1.61$\pm$0.83 & 0.88$\pm$0.45 & 1.12$\pm$0.57 & 0.90$\pm$0.44 \\
\textit{9} & 1.68$\pm$0.85 & 0.92$\pm$0.46 & 1.15$\pm$0.58 & 0.93$\pm$0.44 \\
\end{tabular}
\end{table}

The results show how GSI and FGLL lead to a rather similar performance, whereas GHR-S and UKF-AW-GSI present a degraded performance in comparison. In the case of UKF-AW-GSI, this is caused by the lack of sensorization in nodes 6-9, because the UKF-based strategy tends to reduce the differences between neighboring states (heads) through the prediction function, which performs graph diffusion. Therefore, the lack of sensorization at these nodes prevents the UKF from correcting this tendency. In terms of computation time, the combined estimation time for all the considered time instants is $\sim$0.02 s, $\sim$1.33 s and $\sim$35.81 s for GHR-S, GSI and UKF-AW-GSI respectively, while FGLL obtains all of them in a single optimization that lasts for $\sim$0.80 s\footnote{Estimation times are used as representatives of the practical computational cost of the methods (for FGLL, the factor optimization includes estimation+localization, as explained in Algorithm \ref{alg:fgo}), as the statistical analysis (or LCSM) is practically identical for all the approaches, as well as negligible in terms of computational cost to the more expensive methods.}.

\subsubsection{Localization}

The leak scenarios previously generated are used to perform localization experiments with the different methodologies. Therefore, simulation of a leak-free reference scenario is also required. The localization results are encoded in the images at Figure \ref{fig:loc_t-example}. Each image corresponds to a different approach and encodes through a colourmap (at the right of the image) the corresponding normalized localization metric for each network node (x-axis) at each leak scenario (y-axis). The localization metric for GHR-S-LCSM, GSI-LCSM and UKF-AW-GSI-LCSM is the result of LCSM, whereas FGLL uses the localization metric and statistical analysis explained in Section \ref{sec:method_estimation}. Therefore, the highest the associated value to each pixel, the highest the leak likelihood associated to that node in that leak event. 

\begin{figure}
    \centering
    \includegraphics[width=\linewidth]{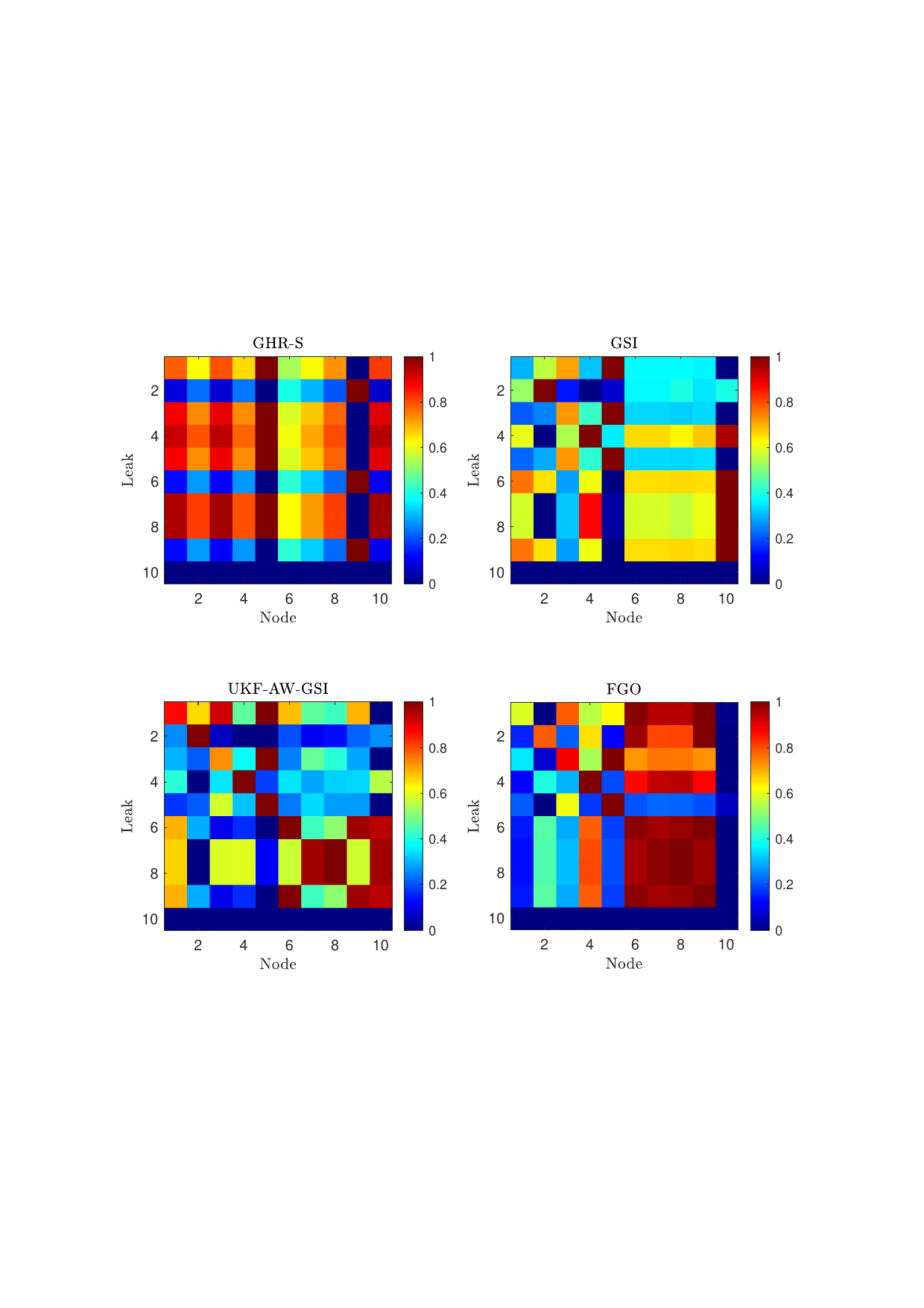}
    \caption{Normalized leak metric for each potential leak, comparing GSI, AW-GSI, UKF-AW-GSI and FGLL. Each image encodes a colour code of the normalized metric of a node (x-axis) in a leak scenario (y-axis).}
    \label{fig:loc_t-example}
\end{figure}

The results show how the localization performance of GHR-S-LCSM is not satisfactory in this case, with nodes 4 and 9 selected as the most likely candidates in most scenarios. 
GSI-LCSM produce a satisfactory result for leaks 2, 4 and 5, which are properly located (the diagonal pixel is among the highest in localization metric). However, most of the rest of leak scenarios show a degraded performance, with the most noticeable case in leaks from 6 to 9, where node 10, that is, the reservoir, is always associated to the highest probability. UKF-AW-GSI-LCSM leads to a satisfactory performance, implying that the state estimation of both leak and leak-free data aligns them in a proper manner. 

Finally, FGLL results present several advantages:
    The value in the diagonal elements indicates a better correlation between leak scenario and proposed leak location with respect to GSI-LCSM;
    although the results for leaks 6-9 can be regarded as similar in those nodes, it is noticeable how for leaks 6 and 9, the nodes with the highest value are 6 and 9, and the same occurs with leaks 7 and 8. This implies a degree of precision that was not shown by the other methods (except for UKF-AW-GSI-LCSM);
    FGLL discards the reservoir node as likely, leading to a more sound result than UKF-AW-GSI-LCSM.


\subsection{Modena}

Leak and leak-free data must be obtained from hydraulic simulations in EPANET \cite{Rossman2000} and
in the case of Modena we selected 32 leak scenarios, distributed uniformly across the network, to reduce the total computation time. These leak scenarios (as well as the leak-free scenario) include data from 3 days of measurements, leading to 72 time instants. 
Moreover, three leak sizes are considered: 4.5, 5.5 and 6.5 $\ell/s$ (representing a 1.13\%, 1.38\% and 1.63\% of the average inflow, respectively). Finally, a 1\% random noise level is considered in the roughness and diameter values of the pipes (due to the changing nature of these parameters, caused by the age and state of the pipes), and a 0.5\% noise in the demand patterns.

Regarding the specific parameterizations in the Modena case study, the UKF process and measurement noise covariance matrices are $Q=I_n$ and $R=10^{-4}I_{n_s+n_d}$ respectively, where $n_d=40$ is the number of demand sensors, and the number of UKF internal iterations is $k_{loop}=75$, reducing the UKF computational cost. For GHR-S we set $f_p = 6$ and $f=14$ (around a 70\% of the number of pressure sensors \cite{Zhou2023}).

\subsubsection{Estimation}

The estimation results for the Modena case are presented in Table \ref{table:estimation_results_L-TOWN_Modena}. Again, we compute the RMSE between the estimated and simulated heads, averaging among all the time instants. Moreover, the presented results are averaged over the leak scenarios as well, as the number of considered leaks impedes showing the individual results for each single leak scenario. The table presents the performance results for both accuracy and computation time.

With respect to the estimation error, GHR-S leads to the least accurate result. GSI and UKF-AW-GSI are capable of outperforming it, with the latter leading to an excellent level of accuracy. In the case of FGLL, the estimation process is capable of improving the performance of GHR-S, but it is outperformed by the GSI-based methods. This is justified by the design of FGLL which optimizes  in a single run the estimation of all the considered time instants within the time window. This implies that the optimization process would pursue a solution that leads to a generalized reduction of the estimation error. Considering that there is no explicit relationship between the heads of a node in consecutive time instants (as they only depend on the demand patterns and boundary conditions at the specific time instant), the parallel estimation of correlated estimated states harms individual estimation performance.

Regarding the computation time, GHR-S is the faster method, whereas GSI reduces the estimation error by an affordable increase in computation time. The UKF-based strategy requires a drastically longer period to compute the solution, although again an error reduction can be achieved. FGLL is around 15 times faster than UKF-AW-GSI.

\begin{table}[t] 
    \caption{Average head estimation process RMSE in (I) Modena and (II) L-TOWN-A, comparing GHR-S, GSI, UKF-AW-GSI and FGLL.}
    \centering
    \resizebox{\linewidth}{!}{
    \begin{tabular}{cc|cccc}
        & \multicolumn{1}{c|}{\multirow{2}{*}{\textbf{Metric}}} & \multicolumn{4}{c}{\textbf{Estimation results ($\mu\pm\sigma$)}} \\
        & \multicolumn{1}{c|}{}& \textbf{GHR-S}\rule{0pt}{2.5ex} & \textbf{GSI} & \textbf{UKF-AW-GSI} & \textbf{FGLL}  \\
         \hline
        \multirow{2}{*}{(I)}
        & $RMSE$ (m) &
        $4.41\pm2.06$ & $2.10\pm0.98$ & $1.41\pm0.66$ & $3.29\pm2.26$ \\
        & $t$ (s) &
        $ 0.13\pm0.01 $ & $ 19.62\pm2.72 $ & $ 936.18\pm 105.40$ & $ 60.74\pm 6.99$ \\
        \hline
        \multirow{2}{*}{(II)}
        & $RMSE$ (m) &
        $0.15\pm0.01$ & $0.15\pm0.01$ & $0.06\pm0.01$ & $0.22\pm0.01$ \\
        & $t$ (s) &
        $0.19\pm0.01$ & $1.67\pm0.12$ & $969.77\pm25.96$ & $249.05\pm68.28$ \\
    \end{tabular}%
    }
    \label{table:estimation_results_L-TOWN_Modena}
\end{table}

\subsubsection{Localization}

Again, the leak scenarios considered to obtain the estimation results are used to perform an analysis of the localization performance. The localization results are presented in Table \ref{table:localization_results_L-TOWN_Modena}. The results are presented as the candidate-to-leak distance error, which measures the shortest-path pipe distance (in both kilometers and the number of pipes) between the actual leak and the selected number of top candidates. Here, two different metrics are presented.
\textit{Best} indicates the actual shortest-path distance between the leak and the most likely candidate (based on the corresponding localization metric: LCSM metric for GHR-S, GSI and UKF-AW-GSI, and the statistical analysis of the localization metric in FGLL). This behaves as a representative of the node-level\footnote{Leak localization performance is categorized as ``node-level'',  identifying the exact leak location, or ``area-level'', where a wider search area is provided.} performance of the method.
\textit{Avg. 5} shows the averaged pipe distance between the leak and the top 5 candidates. This serves as a representative of the area-level result of each method.

The presented results show how UKF-AW-GSI-LCSM and FGLL outperform both GSI-LCSM and GHR-S-LCSM (with the former outperforming the latter too). Analyzing the performance of UKF-AW-GSI-LCSM and FGLL, it is noticeable how the results are quite similar, with UKF-AW-GSI-LCSM leading to a more accurate node-level localization, and FGO producing less error in terms of area-level. Incorporating the estimation results to the analysis, we can conclude that FGLL do not require an estimation as precise as the one produced by UKF-AW-GSI in order to obtain a comparable localization. Moreover, regarding the computation times in Table \ref{table:estimation_results_L-TOWN_Modena}, we can conclude that a similar localization performance can be obtained 15 times faster using FGLL.


\begin{table}[t] 
    \caption{Candidate-to-leak distance error for (I) Modena and (II) L-TOWN-A, comparing GHR-S, GSI, UKF-AW-GSI and FGLL.}
    \centering
    \resizebox{\linewidth}{!}{
    \begin{tabular}{cc|cccc}
        & \multicolumn{1}{c|}{\multirow{2}{*}{\textbf{Metric}}} & \multicolumn{4}{c}{\textbf{Method}} \\
        & \multicolumn{1}{c|}{}& \textbf{GHR-S}\rule{0pt}{2.5ex} & \textbf{GSI} & \textbf{UKF-AW-GSI} & \textbf{FGLL}  \\
         \hline
        \multirow{4}{*}{(I)}
        & \textit{Best (km)} &
        \rule{0pt}{2ex} 1.47$\pm$0.77 & 0.93$\pm$0.80 & 0.69$\pm$0.69 & 0.72$\pm$0.51 \\
        & \textit{Best (pipes)} & 7.37$\pm$3.92 & 4.38$\pm$3.33 & 3.28$\pm$2.91 & 3.34$\pm$1.99 \\
        & \textit{Avg. 5 (km)} & 1.44$\pm$0.79 & 0.93$\pm$0.68 & 0.80$\pm$0.57 & 0.78$\pm$0.44  \\
        & \textit{Avg. 5 (pipes)} & 7.12$\pm$3.82 & 4.37$\pm$3.07 & 3.81$\pm$2.56 & 3.60$\pm$1.86  \\
        \hline
        \multirow{4}{*}{(II)}
        & \textit{Best (km)} &
        $0.89\pm0.54$ & $ 0.36\pm0.19 $ & $0.32\pm0.17$ & $ 0.34\pm0.15 $ \\
        & \textit{Best (pipes)} &
        $18.88\pm11.81$ & $ 7.5\pm3.82 $ & $6.88\pm3.29$ & $ 7.38\pm2.94 $ \\
        & \textit{Avg. 5 (km)} &
        $0.83\pm0.47$ & $ 0.36\pm0.18 $ & $0.33\pm0.17$ & $ 0.31\pm0.13 $ \\
        & \textit{Avg. 5 (pipes)} &
        $17.61\pm10.14$ & $ 7.52\pm3.42 $ & $6.95\pm3.25$ & $ 6.68\pm2.68 $ \\
    \end{tabular}%
    }
    \label{table:localization_results_L-TOWN_Modena}
\end{table}

\subsection{L-TOWN - Zone A}

As explained in Section \ref{sec:studies}, the evaluation in the L-TOWN benchmark is executed through the 2018 dataset \cite{Vrachimis2022} and only considering Area A. However, the original 2018 dataset, provided by the BattLeDIM2020 organizers, does not include demand data from Area A, because no AMRs are installed within that area in the original challenge. Considering that both UKF-AW-GSI and FGLL exploit demand measurements, the 2018 dataset is re-computed by means of the Dataset Generator\footnote{\url{https://github.com/KIOS-Research/BattLeDIM}}, provided by the organizers. Thus we are able to simulate the 2018 leaks in the same conditions of the original dataset while including demand measurements in Area A. 

We consider the existence of an AMR per pressure sensor, locating them in the same nodes, thus 31 AMRs are assumed to be installed. Moreover, a flow sensor is considered to be installed in the pipe connecting Area A and Area B. This is required to compute the actual inflows to the network that are related to Area A, subtracting the inflows to Areas B and C to the inflow of the water inlets.
Water inlets in Area A are hydraulically separated from the rest of the area through a PRV, which maintains a constant head of 75m at the node immediately downstream. In contrast, the reservoirs have a water height of 100m. This drastic mismatch in hydraulic heads between reservoirs and the rest of the network degrades the UKF performance. To mitigate this, the network graph is restructured for the UKF computations: the reservoirs are removed and the nodes immediately downstream of PRVs are redefined as the new “water inlets” each with a head of 76m.

Regarding the specific method parameters for the L-TOWN benchmark, the UKF process and measurement covariance matrices are $Q=I_n$ and $R=10^{-4}I_{2n_s}$ respectively, and $k_{loop}=75$. Additionally, for GHR-S, $f_p = 6$ and $f=22$.

\subsubsection{Estimation}

In order to generate estimation results, the original 2018 dataset is not enough because it only contains sensor data, as we would need information of the heads in all the network nodes in order to compute estimation error. The size of the benchmark makes it impractical to generate a dataset for all the year that contains all the required information. Therefore, we have decided to generate a specific dataset for each leak (maintaining its characteristics with respect to time of the year, size, type), containing leak information for one hour, i.e., 12 time instants. 

The estimation performances for L-TOWN-A are shown in Table \ref{table:estimation_results_L-TOWN_Modena}. The RMSE between estimated and simulated heads is computed for each one of the compared methods using the aforementioned custom datasets, averaging among the 12 available time instants and the leak scenarios. 

The results show how the UKF-based method leads to the best performance in terms of estimation accuracy, but at the cost of a high computation time. In this case, GHR-S produces an extremely fast solution, with a level of accuracy comparable to that of GSI, which requires an extra time in comparison to GHR-S. The balanced performance of GHR-S is caused by the similarity among the nodal heads due to the changes introduced in the benchmark's structure. Regarding FGLL, the estimation error is slightly higher than in the case of the comparison methods, while the computation time is high but much lower than in the case of the UKF-based method.
The FGLL results are again due to the FGO process that produces all the estimates with a single optimization, and therefore the individual estimation error can be higher than in the case of methods which compute estimations individually.

\subsubsection{Localization}

In this case, the pressure and demand data from the 2018 dataset is used to evaluate the methodology. The localization results are presented in Table \ref{table:localization_results_L-TOWN_Modena}. Again, the performance results are represented by the candidate-to-leak distance error in both kilometres and number of pipes. 

The results show that UKF-AW-GSI-LCSM and FGLL outperform both GSI-LCSM and GHR-S-LCSM, with GSI-LCSM producing a comparable results. GHR-S, despite its balanced estimation results, fails in obtaining good localization performance when coupled with LCSM. 
As in the Modena benchmark, we can conclude that UKF-AW-GSI-LCSM is slightly more accurate in terms of node-level localization, while FGLL is more accurate in terms of area-level localization. Again, analysing the estimation performance together with the localization results, we can conclude that FGLL is well suited to obtain precise leak locations despite the higher average estimation error, achieving this around 8 times faster in this benchmark.

\section{Conclusions and Future Work}\label{sec:conclusion}


Our paper introduced an FGO-based leak localization architecture, referred to as FGLL, which is the first to our knowledge to analyze the application of FGO in water distribution networks.
FGLL starts from an initial or prior state
coupled with the head and demand sensor readings received
within a time-window.
Each sensor reading generates a new state.
Based on these data points sensor-fusion is performed in order to perform
leak-free state estimation and leak localization through two separate factor graphs.
The optimization process estimates all the states in the current time-window,
unlike most algorithms which focus soley on the current state.

We compare our work with the GHR-S, GSI and UKF-AW-GSI algorithms
on the T-example, Modena, L-TOWN networks, leading to localization leak error metrics in terms of
kilometres and pipe distance.
Our simulations show an improved estimation localization performance of FGLL in comparison with GHR-S-LCSM (45-50\% and 62-63\% of average error reduction, regarding node and area level localization, and the benchmarks of Modena and L-TOWN, respectively) and GSI-LCSM (16-23\% and 6-14\% of average error reduction). In comparison to UKF-AW-GSI-LCSM, the performance is quite equilibrated, with the UKF-based methods outperforming FGLL in terms of node-level localization (4.35\% and 6.25\% error reduction advantage), but with FGLL outperforming UKF-AW-GSI-LCSM in terms of area-level localization (2.5\% and 6.1\% error reduction improvement), with computation times 8 and 15 times lower for FGLL in comparison to UKF-AW-GSI, considering Modena and L-TOWN-A respectively.

In the future, we plan to deepen our study to include
the application of FGO for other water network problems, such as the optimization of the operation of multi-resource energy systems, combining electrical grids, water networks, and heating/cooling systems \cite{Ramos2018}.
For the problem at hand we will expand our investigation
to include running more than a single FGO iteration
in order to obtain improvements in estimation and localization errors
against optimizing the time-window size.
Also of interest is to include the demand as an extra localization factor and tie it to the zero-sum constraint.
Finally, we will perform a deep analysis of the localization performance when real-world phenomena of WDNs, such as multi-leak scenarios and the operation of active elements such as pumps or valves, are considered within the FGO methods.

\bibliographystyle{IEEEtran}
\bibliography{references}
\begin{IEEEbiography}[{\includegraphics[width=1in,height=1.25in,clip,keepaspectratio]{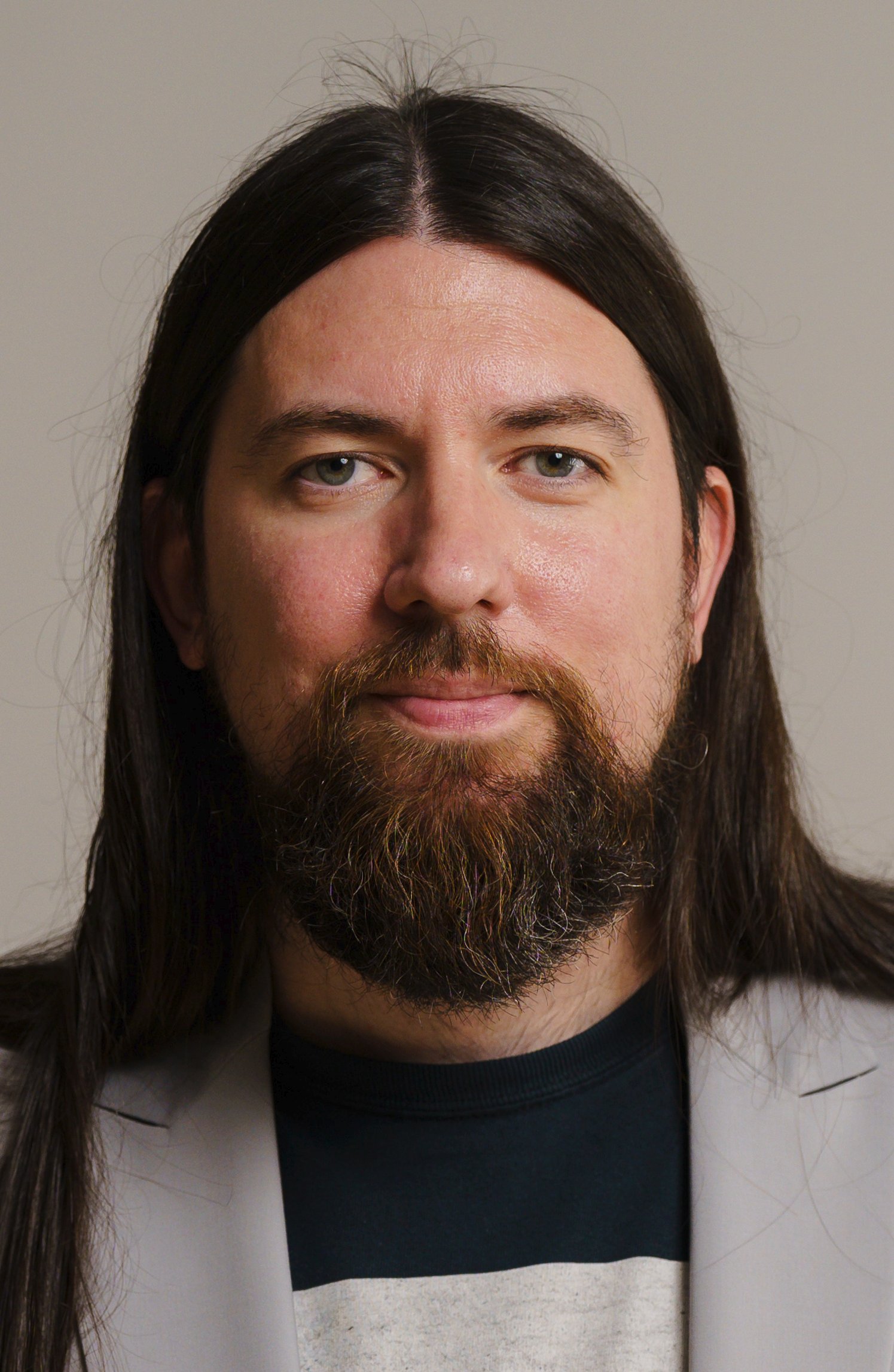}}]{Paul Irofti}
is a Full Professor
within the Computer Science Department
of the Faculty of Mathematics and Computer Science
at the University of Bucharest
He is the co-author of the book “Dictionary Learning Algorithms and Applications” (Springer 2018)
awarded by the Romanian Academy.
He is PhD in Systems Engineering at the Politehnica University of
Bucharest since 2016.
His interests are anomaly detection, signal processing, 
numerical algorithms and optimization.
\end{IEEEbiography}

\begin{IEEEbiography}[{\includegraphics[width=1in,height=1.25in,clip,keepaspectratio]{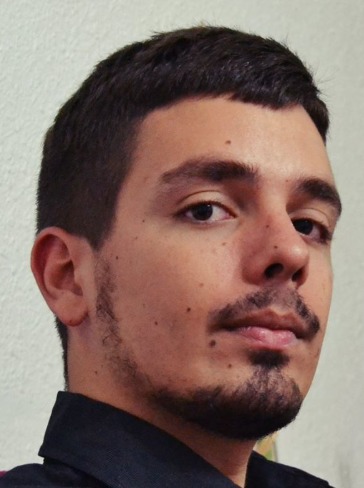}}]{Luis Romero-Ben}
is researcher at the Automatic Control group at the Institut de Robòtica e Informàtica Industrial (CSIC-UPC), Spain. He is a PhD in Real-Time Monitoring of Water Systems by the Universitat Politècnica de Barcelona. His interests are focused on the development and application of data-driven methodologies for the control and monitoring of water distribution networks and urban drainage systems.
\end{IEEEbiography}

\begin{IEEEbiography}[{\includegraphics[width=1in,height=1.25in,clip,keepaspectratio]{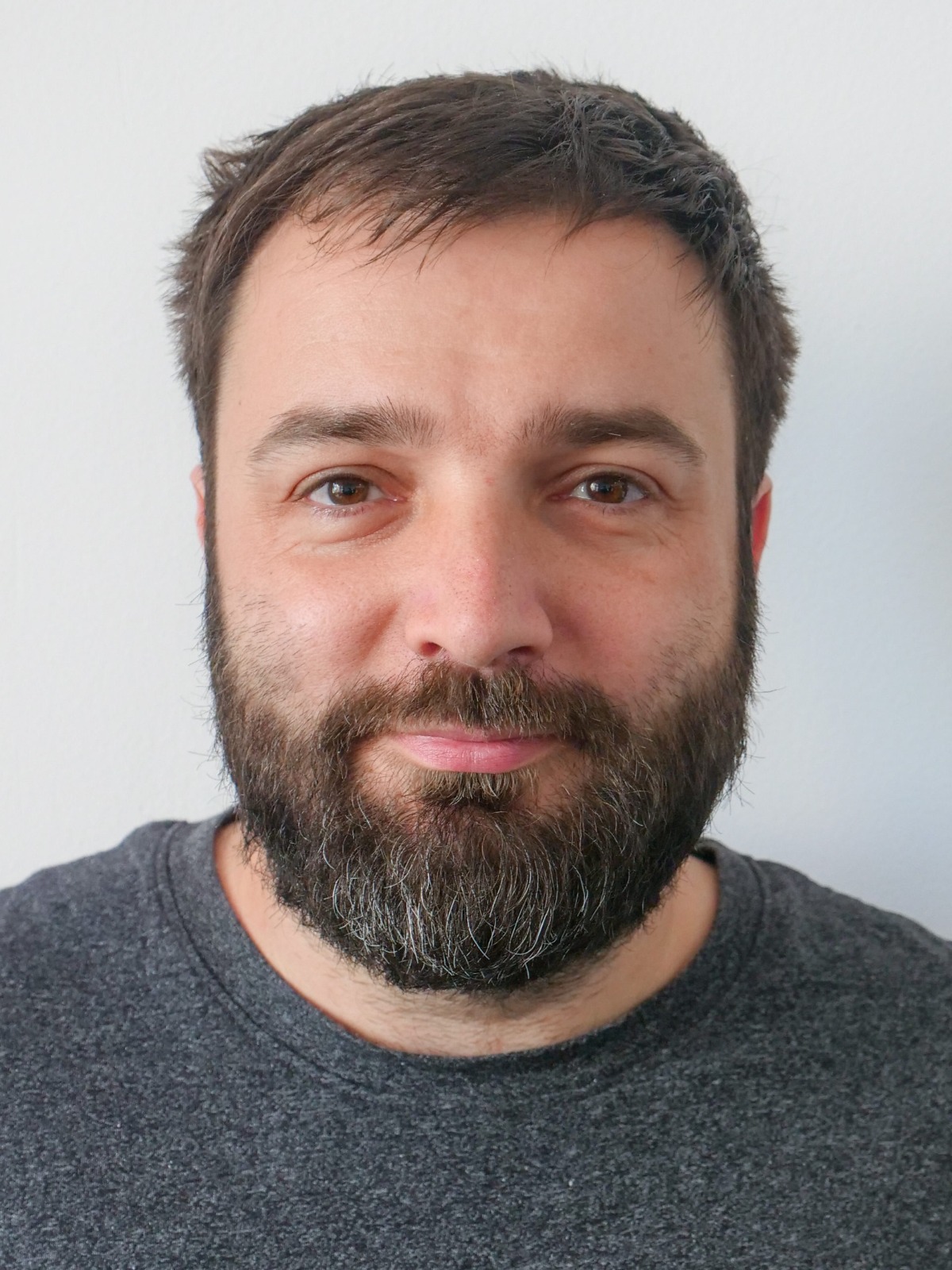}}]{Florin Stoican}
 is Professor in the department of Automatic Control and Systems Engineering, Politehnica University of Bucharest. He obtained his PhD in Control Engineering in 2011 from Supelec (now CentraleSupelec), France with an application of set-theoretic methods for fault detection and isolation. His interests are constrained optimization control, set theoretic methods, fault tolerant control, mixed integer programming, motion planning.
\end{IEEEbiography}

\begin{IEEEbiography}[{\includegraphics[width=1in,height=1.25in,clip,keepaspectratio]{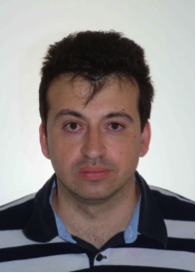}}]{Vicenç Puig}
 holds a PhD in Automatic Control, Vision and Robotics and is the leader of the research group Advanced Control Systems (SAC) at the Polytechnic University of Catalonia.
  He has important scientific contributions in the areas of fault diagnosis and fault tolerant control using interval models. He participated in more than 20 international and national research projects in the last decade. He led many private contracts, and published more than 80 articles in JCR journals and more than 350 in international conference/workshop proceedings. Prof. Puig supervised over 20 PhD theses and over 50 MA/BA theses.
\end{IEEEbiography}

\vfill

\end{document}